\numberwithin{equation}{section}
\declaretheoremstyle[
  bodyfont=\normalfont\itshape,
  headformat=\NAME\ \NUMBER\NOTE,
]{myplain}
\declaretheoremstyle[
  headformat=\NAME\ \NUMBER\NOTE,
]{mydefinition}
\newcommand{\envqed}{{\lower-0.3ex\hbox{$\triangleleft$}}}
\declaretheorem[style=myplain,numberwithin=section]{theorem}
\declaretheorem[style=mydefinition,numberlike=theorem,qed=\envqed]{remark}
\crefname{section}{section}{sections}
\crefname{subsection}{subsection}{subsections}
\crefname{appendix}{}{}
\Crefname{section}{Section}{Sections}
\Crefname{subsection}{Subsection}{Subsections}
\crefname{figure}{Figure}{Figures}
\crefname{definition}{Definition}{Definitions}
\crefname{theorem}{Theorem}{Theorems}
\crefname{lemma}{Lemma}{Lemmas}
\newcommand*\patchAmsMathEnvironmentForLineno[1]{%
  \expandafter\let\csname old#1\expandafter\endcsname\csname #1\endcsname
  \expandafter\let\csname oldend#1\expandafter\endcsname\csname end#1\endcsname
  \renewenvironment{#1}%
     {\linenomath\csname old#1\endcsname}%
     {\csname oldend#1\endcsname\endlinenomath}}%
\newcommand*\patchBothAmsMathEnvironmentsForLineno[1]{%
  \patchAmsMathEnvironmentForLineno{#1}%
  \patchAmsMathEnvironmentForLineno{#1*}}%
\definecolor{bluemakie}{rgb}{0.0, 0.44705883,0.69803923}
\definecolor{yellowmakie}{rgb}{0.9019608, 0.62352943, 0.0}
\definecolor{greenmakie}{rgb}{0.0, 0.61960787, 0.4509804}
\definecolor{skybluemakie}{rgb}{0.3372549, 0.7058824, 0.9137255}
\definecolor{orangemakie}{rgb}{0.8352941, 0.36862746, 0.0}
\definecolor{britishracinggreen}{rgb}{0.0, 0.4, 0.12}
\newcommand{\kr}{Knothe-Rosenblatt rearrangement\xspace}
\newcommand{\linad}{linear advection\xspace}
\newcommand{\linpam}{Lin-PAM}
\newcommand{\lo}{Lorenz-63 model\xspace}
\newcommand{\elo}{embedded \lo}
\newcommand{\consSMF}{\text{ConsSMF}\xspace}
\newcommand{\eg}{e.g.,\xspace}
\newcommand{\ie}{i.e.,\xspace}
\newcommand{\BB}{\boldsymbol}
\newcommand{\be}{\begin{equation}}
\newcommand{\ee}{\end{equation}}
\newcommand{\ba}{\begin{align}}
\newcommand{\ea}{\end{align}}
\newcommand{\real}[1]{\mathbb{R}^{#1}}
\newcommand{\complex}[1]{\mathbb{C}^{#1}}
\newcommand{\code}[1]{\texttt{#1}}
\newcommand{\id}[1]{\BB{I}_{#1}}
\newcommand{\zero}[1]{\BB{0}_{#1}}
\newcommand{\one}{\BB{1}}
\newcommand{\indep}{\perp \!\!\! \perp}
\newcommand{\invar}{\textbf{H}}
\newcommand{\C}{\BB{c}} 
\newcommand{\supp}[1]{\operatorname{supp}(#1)}
\newcommand{\A}{\BB{A}}
\newcommand{\U}{\BB{U}}
\newcommand{\T}{\BB{T}}
\newcommand{\Q}{\BB{Q}}
\renewcommand{\L}{\BB{L}}
\newcommand{\eigen}{\BB{\Lambda}}
\newcommand{\diag}[1]{\BB{\mathrm{D}}_{#1}}
\newcommand{\x}{\BB{x}}
\newcommand{\y}{\BB{y}}
\newcommand{\ystar}{\BB{y}^\star}
\newcommand{\X}{\BB{\mathsf{X}}}
\newcommand{\Y}{\BB{\mathsf{Y}}}
\newcommand{\Z}{\BB{\mathsf{Z}}}
\newcommand{\z}{\BB{z}}
\newcommand{\rmse}{\operatorname{RMSE}}
\newcommand{\spread}{\operatorname{spread}}
\newcommand{\stack}[2]{\begin{bmatrix} #1 \\ #2 \end{bmatrix}}
\newcommand{\YX}{\stack{\Y}{\X}}
\newcommand{\Xup}{\boldsymbol{\mathcal{X}}}
\newcommand{\Yup}{\boldsymbol{\mathcal{Y}}}
\newcommand{\Xperpup}{\boldsymbol{\mathcal{X}}_{\perp}}
\newcommand{\Xparaup}{\boldsymbol{\mathcal{X}}_{\parallel}}
\newcommand{\iup}[1]{{#1}^{(i)}}
\newcommand{\dyn}{\BB{f}}
\newcommand{\obs}{\BB{g}}
\newcommand{\Obs}{\BB{G}}
\newcommand{\noiseobs}{\BB{\epsilon}}
\newcommand{\Noisedyn}{\BB{\mathsf{W}}}
\newcommand{\Noiseobs}{\BB{\mathcal{E}}}
\newcommand\given[1][]{\:#1\vert\:}
\newcommand{\enkf}{EnKF\xspace}
\newcommand{\uncons}{unconstrained\xspace}
\newcommand{\cons}{constrained\xspace}
\newcommand{\unenkf}{UnEnKF\xspace}
\newcommand{\consenkf}{ConsEnKF\xspace}
\newcommand{\smf}{SMF\xspace}
\newcommand{\unsmf}{UnSMF\xspace}
\newcommand{\consmf}{ConsSMF\xspace}
\newcommand{\E}[2]{\mathrm{E}_{#1}\left[#2\right]}
\newcommand{\pdf}[1]{\pi_{#1}}
\newcommand{\N}[2]{\mathcal{N}\left(#1, #2\right)}
\newcommand{\cov}[1]{\BB{\Sigma}_{#1}}
\newcommand{\scov}[1]{\widehat{\BB{\Sigma}}_{#1}}
\newcommand{\mean}[1]{\BB{\mu}_{#1}}
\newcommand{\smean}[1]{\widehat{\BB{\mu}}_{#1}}
\newcommand{\meanx}{\mean{\X}}
\newcommand{\covx}{\cov{\X}}
\newcommand{\meany}{\mean{\Y}}
\newcommand{\covy}{\cov{\Y}}
\newcommand{\meanz}{\mean{\Z}}
\newcommand{\covz}{\cov{\Z}}
\newcommand{\Uperp}{\U_\perp}
\newcommand{\Upara}{\U_\parallel}
\newcommand{\xperp}{\x_\perp}
\newcommand{\xpara}{\x_\parallel}
\newcommand{\Xperp}{\X_\perp}
\newcommand{\Xpara}{\X_\parallel}
\newcommand{\push}[1]{{#1}_{\sharp}}
\newcommand{\tmap}{\BB{T}_{\ystar}}
\newcommand{\constmap}{\widetilde{\BB{T}}_{\ystar}}
\newcommand{\tmapkf}{\BB{T}_{\text{KF}}}
\newcommand{\smap}{\BB{S}}
\newcommand{\smappush}{\BB{S}_\sharp}
\newcommand{\lmap}{\BB{L}}
\newcommand{\ddt}[1]{\frac{\partial #1}{\partial t}}
\renewcommand{\div}{\nabla\cdot}
\newcommand{\svec}{\BB{s}}
\newcommand{\domain}{\Omega}
\newcommand{\dtobs}{\Delta t_{\text{obs}}}
\renewcommand{\d}[1]{\mathrm{d}#1}
\newcommand{\dt}{\d{t}}
\renewcommand{\Finv}{\boldsymbol{\mathcal{F}}^{-1}}
\definecolor{blue1}{RGB}{0,76,153}
\definecolor{green1}{RGB}{1,109,57}
\definecolor{red1}{RGB}{153,0,0}
\definecolor{MainBlue}{RGB}{20,80,200}
\definecolor{LightBlue}{RGB}{102,178,255}
\definecolor{MainGreen}{RGB}{34,139,102}
\definecolor{MainOrange}{RGB}{217,95,2}
\definecolor{AccentOrange}{RGB}{230,159,0}
\definecolor{AxisGray}{RGB}{120,120,120}
\newcommand{\Span}{\mathrm{span}}
\renewcommand{\d}{\mathrm{d}}
\newcommand{\orcid}[1]{ORCID:~\href{https://orcid.org/#1}{#1}}
\begin{document}

\begin{frontmatter}

\title{Preserving linear invariants in ensemble filtering methods}

\author[1]{Mathieu Le Provost\fnref{orcidML}}
\fntext[orcidML]{\orcid{0000-0003-0396-5740}}
\address[1]{Department of Aeronautics and Astronautics, Massachusetts Institute of Technology, Cambridge, MA, 02139, USA}

\author[2]{Jan Glaubitz\fnref{orcidJG}\corref{cor1}}
\fntext[orcidJG]{\orcid{0000-0002-3434-5563}}
\ead{jan.glaubitz@liu.se}
\cortext[cor1]{Corresponding author}
\address[2]{Department of Mathematics, Link\"oping University, SE-58183 Linköping, Sweden}

\author[1]{Youssef Marzouk\fnref{orcidYM}}
\fntext[orcidYM]{\orcid{0000-0001-8242-3290}}

\begin{abstract}
  Data assimilation combines dynamical models with observations to improve state estimates. 
Ensemble filters sequentially assimilate observations by updating a set of samples over time, alternating between a forecast and an analysis step.
Accurate and robust predictions often require preserving critical invariants such as mass, stoichiometric balance of chemical species, and electrical charge.
While modern numerical solvers maintain these invariants, existing invariant-preserving analysis steps are limited to Gaussian settings. 
Furthermore, they can be incompatible with regularization techniques such as inflation and covariance tapering.
In this work, we focus on preserving linear invariants in non-Gaussian filtering problems. 
Leveraging tools from measure transport theory, we introduce a novel class of nonlinear ensemble filters that preserve any desired linear invariants. 
Notably, we recover a \cons formulation of the Kalman filter for the special case of the Gaussian setting. 
We also demonstrate how to combine preserving invariants with regularization techniques in the ensemble Kalman filter. 
Numerical experiments illustrate the benefits of preserving linear invariants in both ensemble Kalman filters and transport-based nonlinear ensemble filters.
\end{abstract}

\begin{keyword}
  Data assimilation \sep 
  nonlinear filtering \sep 
  ensemble Kalman filter \sep 
  linear invariants
  
  \MSC
  	65C35 \sep 
	62M20 \sep 
    65N75 
    
    \textit{Reproducibility code:} 
    \href{https://github.com/mleprovost/Paper-Linear-Invariants-Ensemble-Filters}{\textcolor{magenta}{https://github.com/mleprovost/Paper-Linear-Invariants-Ensemble-Filters}} 
\end{keyword}

\end{frontmatter}

\section{Introduction}

Filtering is a potent data assimilation (DA) framework for enhancing the accuracy of numerical simulations of complex physical systems by sequentially incorporating observational data \cite{asch2016data,carrassi2018data,evensen2022data,sanz2023inverse}. 
Consider a state process $\{\X_t\}_{t \geq 0} \in \real{n}$ and observation variables $\Y_t \in \real{d}$, where $t$ represents time.
In the filtering problem, we seek to characterize the so-called filtering distribution $\pdf{\X_t \given \Y_{1:t} = \ystar_{1:t}}$, where $ \ystar_1,\ldots, \ystar_t$ are the realizations of the observation variables up to time $t$. 
The filtering distribution is generally unavailable in closed form and computationally challenging to characterize. 
Ensemble filters are an important class of algorithms for building Monte Carlo approximations of the filtering distribution, by updating a set of samples over time \citep{asch2016data, carrassi2018data}. 
They usually operate in two steps: the forecast step, which propagates each sample through the dynamical model, and the analysis step, which updates the samples based on new observations. 
Notably, the analysis step does not involve time propagation and can be treated as a static inference problem.

It has long been recognized that many dynamical systems conserve specific quantities, such as mass, momentum, Hamiltonians, energy, stoichiometric balance of chemical species, and electrical charge. 
We refer to these conserved quantities as ``invariants.'' 
Specifically, we say that $\invar \colon \real{n} \to \real{r}$ is an invariant for the state process $\{\X_t\}_{t \geq 0}$ if the value of $\invar$ is conserved over time. 
That is, for any $\X_0 \sim \pdf{\X_0}$, we have $\invar(\X_t) = \invar(\X_0)$ for all $t \geq 0$, where $\pdf{\X_0}$ is a distribution for the initial condition. 

To produce physically admissible solutions, modern numerical solvers ensure that discrete solutions mimic the critical invariants of the original system \citep{hairer2006geometric}.
However, DA methods have no intrinsic knowledge of the invariants of the underlying dynamical system and can produce non-physical state estimates. 
For instance, they may result in flow fields with mass imbalance, negative chemical concentrations, or nonzero divergence in incompressible fluid mechanics \citep{albers2019ensemble, janjic2014conservation}. 
We refer readers to \citep{janjic2014conservation,dubinkina2018relevance} for further motivations on preserving invariants in data assimilation schemes. 
Thus, it is critical to incorporate our long-standing physical knowledge into these data assimilation algorithms to produce physically admissible state estimates.

\subsection*{Our contribution}

We propose a new class of filtering methods designed to preserve linear invariants of the form $\invar(\X) = \Uperp^\top \X$, where $\Uperp \in \real{r \times n}$. 
Such invariants are ubiquitous in scientific and engineering applications. Examples include mass conservation, divergence-free constraints in incompressible fluid dynamics \citep{kajishima2016computational}, force balance in statics \citep{craig2006fundamentals}, conservation of electric charge and current in Kirchhoff's laws, and stoichiometric constraints in chemical reaction networks.

To this end, we adopt the perspective of recent works such as \citep{spantini2022coupling, leprovost2021low, leprovost2023adaptive, ramgraber2023_smoothing_part1}, which interpret the analysis step of a filter as a transformation $\BB{T}_{\ystar_t}$---referred to as the \emph{analysis map}---that maps samples $\{ (\iup{\y}, \iup{\x}) \}$ from the joint forecast distribution $\pdf{(\Y_t, \X_t) \given \Y_{1:t-1} = \ystar_{1:t-1}}$ to samples $\{\iup{\x}_a \}$ from the filtering distribution $\pdf{\X_t \given \Y_{1:t} = \ystar_{1:t}}$ via $\iup{\x}_a = \BB{T}_{\ystar_t}(\iup{\y}, \iup{\x})$.
For instance, the ensemble Kalman filter (\enkf) \citep{evensen1994sequential} estimates an \emph{affine} analysis map by replacing the covariance matrices of the classical Kalman update with empirical covariances computed from the joint samples $\{ (\iup{\x}, \iup{\y}) \} \sim \pdf{(\X_t, \Y_t) \given \Y_{1:t-1} = \ystar_{1:t-1}}$; see \citep{spantini2022coupling, leprovost2022lowenkf}.

Building on this formulation, we introduce a new class of (potentially nonlinear) analysis maps, denoted by $\widetilde{\BB{T}}_{\ystar_t}$, that are explicitly designed to \emph{preserve linear invariants}. 
We refer to these as \emph{Linear-invariant-Preserving Analysis Maps (\linpam{}s)}. 
Specifically, we consider any set of $r$ linear invariants of the state variable $\X$ expressed as $\invar(\x) = \Uperp^\top \x$ for some $\Uperp \in \real{r \times n}$.
For any joint forecast sample $(\iup{\y}, \iup{\x})$ with invariant value $\invar(\iup{\x}) = \iup{\C} \in \real{r}$, the proposed \linpam{} satisfies $\invar(\widetilde{\BB{T}}_{\ystar_t}(\iup{\y}, \iup{\x})) = \iup{\C}$. 

To construct such \linpam{}s, we perform a change of variables from the original observation-state space to a rotated coordinate system in which the first $r$ state components represent the linear invariants. 
By carrying out the analysis step in this rotated space while omitting updates to the invariant coordinates, and subsequently lifting the result back to the original space, we ensure exact preservation of invariants. 
Crucially, we show that any empirical estimator $\widehat{\BB{T}}{\ystar_t}$ of a \linpam\ $\widetilde{\BB{T}}{\ystar_t}$---obtained from forecast samples---automatically preserves linear invariants, regardless of the estimator’s accuracy.

In the Gaussian setting, our framework recovers a projected version of the Kalman filter that explicitly preserves linear invariants, as proposed in \citep{amor2018constrained}. 
Moreover, we leverage this construction to develop a novel \emph{invariant-preserving ensemble Kalman filter} that remains compatible with commonly used regularization techniques. 
This resolves an open issue in ensemble Kalman filtering: popular regularization techniques---such as covariance inflation, localization, and tapering \citep{janjic2014conservation, asch2016data, li2024structurally}---often break conservation of invariants.

We demonstrate the performance of the proposed \linpam{}s in a series of computational experiments. 
In a synthetic example with an arbitrary number $r$ of linear invariants, we show that preserving invariants significantly improves performance, especially when the ensemble size $M$ is small and the ratio $r/n$ of invariants to state dimension is large. 
We also assess the benefits of invariant preservation in both linear and nonlinear filtering problems, using a linear advection equation and a low-dimensional chaotic system, respectively. In both cases, we observe that enforcing linear invariants reduces the filter's tracking error.

\subsection*{Related works} 

The problem of enforcing constraints has been studied for the Kalman filter and the \enkf in \citep{simon2010kalman, amor2018constrained, albers2019ensemble, gupta2007kalman, wu2019adding, zhang2020regularized}. Different strategies have been pursued: \citep{simon2010kalman, albers2019ensemble} leveraged a variational formulation of the Kalman filter and the \enkf to enforce hard constraints. 
\citep{gupta2007kalman} enforced hard constraints in the Kalman filter by augmenting the observations with a noiseless version of the constraints. 
\citep{prakash2010constrained} extended this observation augmentation approach to account for soft constraints in the \enkf. 
\citep{janjic2014conservation} used a constrained optimization to enforce linear invariants and positivity constraints in the  \enkf. 
\citep{wu2019adding} enforced soft constraints in the \enkf by a reweighting of the ensemble members. 
\citep{zhang2020regularized} introduced the regularized \enkf to account for various sources of prior knowledge, \eg hard/soft constraints and sparsity.
Furthermore, after the first version of the present work was published on arXiv (see \cite{provost2024preserving}), \cite{subrahmanya2024preserving} proposed a variational extension of the Fokker--Planck equation---which describes a general particle flow filtering framework---to incorporate nonlinear equality state constraints in particle flow filters. 
Specifically, \cite{subrahmanya2024preserving} presents two algorithmic approaches: (i) VFPSTAB, which inexactly preserves constraints by adding a stabilizing drift term, and (ii) VFPDAE, which preserves constraints by treating the VFP dynamics as a stochastic differential-algebraic equation (SDAE). 
The numerical realization of the VFPDAE approach serves as an ``evolve and project'' method, in which the discretized posterior flow dynamics first evolve the solution and then project it onto the constraint manifold. 
While the VFPDAE approach exactly preserves invariants, it is unclear whether it still targets the correct posterior. 

In contrast, the present work adopts a functional perspective and introduces a generic framework for preserving linear invariants in general---potentially non-Gaussian---settings. 
Our method is exact, making it straightforward to implement while preserving constraints rigorously.

\subsection*{Outline}

\Cref{sec:nomenclature} introduces the notation used throughout the paper.
\Cref{sec:filtering} provides an overview of the Bayesian filtering problem and common ensemble-based approaches.
\Cref{sec:transport} introduces key tools from measure transport theory and uses them to derive a general expression for the analysis map.
In \Cref{sec:linPAMs}, we construct novel analysis maps that preserve linear invariants, referred to as \linpam{}s.
\Cref{sec:gaussian} specializes the framework to the Gaussian setting.
\Cref{sec:invariants_kalman} analyzes when the analysis step of the classical Kalman filter preserves linear invariants.
Numerical experiments comparing the \uncons{} and \cons{} variants of the ensemble Kalman filter and the stochastic map filter are presented in \Cref{sec:examples}.
\Cref{sec:conclusion} concludes the paper with a summary of findings and directions for future work. 
\section{Nomenclature \label{sec:nomenclature}}

In the rest of this manuscript, we use the following conventions. 
Serif fonts refer to random variables, e.g., $\BB{\mathsf{Q}}$ on $\real{n}$ or $\mathsf{Q}$ on $\real{}$. 
Lowercase Roman fonts denote realizations of random variables, e.g., $\BB{q}$ on $\real{n}$ or $q$ on $\real{}$. 
$\pdf{\BB{\mathsf{Q}}}$ denotes the probability density function for the random variable $\BB{\mathsf{Q}}$, and $\BB{\mathsf{Q}} \sim \eta$ means that the random variable $\BB{\mathsf{Q}}$ is distributed according to $\eta$. 
If not stated otherwise, we assume that the probability densities have full support.
The mean and covariance matrix of the random variable $\BB{\mathsf{Q}}$ are denoted by $\mean{\BB{\mathsf{Q}}}$ and $\cov{\BB{\mathsf{Q}}}$, respectively. 
The cross-covariance matrix of the random variables $\BB{\mathsf{Q}}$ and $\BB{\mathsf{R}}$ is denoted by $\cov{\BB{\mathsf{Q}}, \BB{\mathsf{R}}}$.
Empirical quantities are differentiated from their asymptotic counterparts using carets above each symbol, e.g., $\smean{\BB{\mathsf{Q}}}$. 
A matrix $\U \in \real{r \times n}$ is a sub-unitary matrix if its columns are orthonormal, i.e., $\U^\top \U = \id{n}$. 
A sub-unitary matrix is orthonormal if $r = n$ and $\U \U^\top = \id{n}$.
\section{Background on the filtering problem}
\label{sec:filtering}

Consider a generic state-space model given by the pair of a dynamical model and an observation model for the state process  $\{ \X_t \}_{t \geq 0}$ and the observation process $\{ \Y_t \}_{t > 0}$. 
The state process $\{ \X_t \}_{t \geq 0}$ is fully described by an initial distribution $\pdf{\X_0}$ and a dynamical model that propagates the state forward in time:
\begin{equation}
\label{eqn:dyn}
    \X_{t+1} = \dyn(\X_t) + \Noisedyn_t, \; \text{for }t \geq 0,
\end{equation}
where $\dyn: \real{n} \to \real{n}$ is the forward operator and the process noise $\Noisedyn_t \in \real{n}$ is independent of the state $\X_t$. 
We do not typically have access to full observations. 
Instead, the state process $\{\X_t \}_{t \geq}$ is only observed through an indirect and perturbed process $\{ \Y_t \}_{t > 0}$, where $\Y_t$ is given by the observation model
\begin{equation}
\label{eqn:obs}
    \Y_{t} = \obs(\X_t) + \Noiseobs_t, \; \text{for } t > 0.
\end{equation}
Here, $\obs \colon \real{n} \to \real{d}$ is the observation operator, and the observation noise variable $\Noiseobs_t$ is independent of $\X_t$. 

The filtering problem is to characterize the filtering density $\pdf{\X_t \given \Y_{1:t} = \ystar_{1:t}}$, which describes the probability of a particular state realization at time $t$ given all the realizations of the observation variable $\Y$ up to that time \citep{asch2016data}. 
The filtering density $\pdf{\X_t \given \Y_{1:t} = \ystar_{1:t}}$ cannot be computed in closed form for generic state-space models and non-Gaussian initial state distributions. 
This limitation motivated the development of empirical approximations of the filtering density. 

It is important to note that through the form of the dynamical and observation models, we are making particular assumptions on the conditional independence structure of the joint distribution of the state and observation processes $\pdf{\X_{0:t}, \Y_{1:t}}$. 
Indeed, we assume that the state process follows a Markov chain, such that the state at time $t$ is conditionally independent of the state realizations at previous times given the state at time $t-1$, i.e., $\X_{t} \indep \X_{s} \given \X_{t-1}$ for any $s \leq t-1$. 
This implies $\pdf{\X_{t} \given \X_{1:t-1}} = \pdf{\X_{t} \given \X_{t-1}}$. 
Similarly, we assume that the observation variable $\Y_t$ at time $t$ is conditionally independent of the state realizations at previous times given the state at time $t$, i.e., $\pdf{\Y_t \given \X_{1:t}} = \pdf{\Y_t \given \X_t}$. 
Under these conditional independence assumptions, we can factorize the joint distribution $\pdf{\X_{0:T}, \Y_{1:T}}$ as
\begin{equation}\label{eq:joint_distr}
    \pdf{\X_{0:T}, \Y_{1:T}} 
        = \underbrace{\pdf{\X_0}}_{\text{initial state}} 
        \prod_{t=1}^T \underbrace{\pdf{\Y_t \given \X_t}}_{\text{observations}} \underbrace{\pdf{\X_t \given \X_{t-1}}}_{\text{dynamics}}.
\end{equation}
From \cref{eq:joint_distr}, we derive a recursive relation to propagate the filtering distribution from time $t$ to $t-1$:
\begin{equation}
\label{eqn:filtering_update}
    \pdf{\X_t \given \Y_{1:t}} 
    		\propto \pdf{\Y_t \given \X_t} \, \pdf{\X_{t} \given \Y_{1:t-1}} 
		= \pdf{\Y_t \given \X_t} \int \pdf{\X_t \given \X_{t-1}} \, \pdf{\X_{t-1} \given \Y_{1:t-1}} \, \d \X_{t-1}
\end{equation}
This recursive update operates in two steps: 
First, the forecast distribution $\pdf{\X_t \given \Y_{1:t-1}}$ is obtained by propagating the filtering distribution $\pdf{\X_{t-1} \given \Y_{1:t-1}}$ through the transition kernel $\pdf{\X_t \given \X_{t-1}}$ of the dynamical model \eqref{eqn:dyn}, i.e., $\pdf{\X_t \given \Y_{1:t-1}} = \int \pdf{\X_t \given \X_{t-1}} \pdf{\X_{t-1} \given \Y_{1:t-1}} \d\X_{t-1}$. 
This last equation is known as the Chapman-Kolmogorov equation \citep{asch2016data, carrassi2018data}. Second, we apply Bayes' rule to condition the forecast distribution on the realization of the observation at time $t$, resulting in the filtering distribution at time $t$.

Solving the filtering problem analytically is generally infeasible outside the linear-Gaussian setting \citep{asch2016data}. 
We thus generally rely on Monte Carlo approximations of the filtering density $\pdf{\X_t \given \Y_{1:t}}$ by propagating a set of samples $\{\iup{\x} \}_{i=1}^M$ over time. 
Here, we consider \emph{ensemble filtering methods} \citep{carrassi2018data, spantini2022coupling}, which assign equal weights to the samples. 
These algorithms mimic the two-step recursive update of the filtering distribution in \eqref{eqn:filtering_update} at the sample level. 
First, in the \emph{forecast step}, each sample is propagated through the dynamical model \eqref{eqn:dyn}, yielding a Monte Carlo approximation of the forecast distribution $\pdf{\X_t \given \Y_{1:t-1}}$. 
Second, in the \emph{analysis step}, we condition the forecast samples on the observation variable's realization $\ystar_t$ at time $t$.  
\section{Background on ensemble transport methods \label{sec:transport}}

We briefly review (triangular) transport maps. 
In particular, we recall some of their appealing properties for conditional inference and how they can be leveraged to construct analysis maps \citep{spantini2022coupling}.

\subsection{Motivation for ensemble transport methods}
\label{subsec:transport_motivation}

To condition forecast samples on new observation realizations in the analysis step, we leverage the formalism of measure transport \citep{marzouk2016sampling, spantini2022coupling}. 
To this end, we rely on the existence of a transformation $\BB{T}_{\ystar_t}$, called \emph{prior-to-posterior transformation} or \emph{analysis map}, that transforms the joint forecast distribution $\pdf{(\Y_t, \X_t) \given \Y_{1:t-1} = \ystar_{1:t-1}}$ into the filtering distribution $\pdf{\X_t \given \Y_{1:t} = \ystar_{1:t}}$. 
Under Gaussian assumptions on the joint forecast distribution of the states and observations, the analysis map corresponds to the widely-used Kalman filter update \citep{kalman1960new, spantini2022coupling}. 
In practice, the analysis map must be approximated from samples $\{(\iup{\y}, \iup{\x}) \}$ of the joint forecast distribution $\pdf{(\Y_t, \X_t) \given \Y_{1:t-1} = \ystar_{1:t-1}}$. 
For instance, the ensemble Kalman filter (\enkf) introduced by \cite{evensen1994sequential} relies on a Monte Carlo approximation of the Kalman gain $\cov{\X_t,\Y_t} \cov{\Y_t}^{-1}$ from the joint forecast samples.

Outside the Gaussian setting that underpins the Kalman filter and \textit{t}-distributions \citep{leprovost2023adaptive}, it is challenging to analytically obtain analysis maps for an arbitrary joint forecast distribution $\pdf{(\X_t, \Y_t) \given \Y_{1:t-1} = \ystar_{1:t-1}}$. 
To address these limitations, \cite{spantini2022coupling} used measure transport theory to estimate analysis maps in non-Gaussian settings. 
This active field of research aims at characterizing a target distribution $\pdf{}$ as the transformation of a simpler reference distribution $\eta$ by a map $\smap$ \citep{villani2009optimal, marzouk2016sampling, peyre2019computational}. 
This transport-based methodology provides a principled generalization of the linear Kalman filter to nonlinear analysis maps, producing consistent inference for non-Gaussian filtering problems. 
The resulting ensemble filter is called the \emph{stochastic map filter (SMF)}. 
Transport-based analysis approaches have since been explored further in works such as \cite{chipilski2023exact,al2023optimal,al2024nonlinear}.

\subsection{Overview of triangular transport methods \label{subsec:triangular_transport}}

We next recall some introductory elements on \emph{triangular} transport methods \citep{marzouk2016sampling}.
Given a target distribution with density $\pdf{}: \real{m} \to \real{}$, it is helpful to describe it as the transformation of a simpler reference distribution with density $\eta: \real{m} \to \real{}$ by a map $\smap \colon \real{m} \to \real{m}$. 
A bijective and differentiable map $\smap$ that transforms the distribution $\pdf{}$ into $\eta{}$ is called a transport map, and we say that $\smap$ ``pushes forward'' $\pdf{}$ to $\eta$, denoted $\smappush \pdf{} = \eta$ \citep{marzouk2016sampling}. The formula for the push-forward distribution $\smappush \pdf{}$ corresponds to the classical change of variables in multivariate calculus: 
\begin{equation}
\label{eqn:pushforward}
    \smappush \pdf{}(\z)  = \pdf{}(\smap^{-1}(\z)) \det \nabla_{\z}\smap^{-1}(\z)
\end{equation}
Transport maps are appealing for sampling as i.i.d.\ samples $\{\iup{\z} \}$ from  $\eta$ get mapped to i.i.d. samples $\{\smap(\iup{\z}) \}$ from $\pdf{}$; see \citep{marzouk2016sampling}. 
Building such transformations $\smap$ is the core topic of measure transport theory \citep{marzouk2016sampling}, classically viewed through the perspective of cost minimization \citep{villani2009optimal, peyre2019computational}. 
In this work, we take a different perspective and focus on transformations with appealing properties for Bayesian inference. 
Specifically, we are interested in transport maps tailored for sampling from the conditionals of a joint distribution. 

For the analysis step of the filtering problem, we have samples from the joint forecast distribution $\pdf{(\Y_t, \X_t) \given \y_{1:t-1}^{\star}}$ and seek to generate samples from the filtering distribution $\pdf{\X_t \given \y_{1:t}^{\star}}$. 
Among the transport maps pushing forward $\pdf{}$ to $\eta$, we consider the \kr \citep{rosenblatt1952remarks}. 
It is defined as the unique lower triangular and strictly increasing transformation $\smap \colon \real{m} \to \real{m}$ and is of the form
\begin{equation}
\label{eqn:kr_rearrangement}
     \smap(\z) = \smap(z_1, z_2, \cdots, z_m)=\left[\begin{array}{l}
    S^{1}\left(z_{1}\right) \\
    S^{2}\left(z_{1}, z_{2}\right) \\
    \vdots \\
    S^{m}\left(z_{1}, z_{2}, \ldots, z_{m}\right)
    \end{array}\right],
\end{equation}
where the strict monotonicity of $S^k: \real{k} \to \real{}$ signifies that the univariate function $\xi \mapsto S^k(\z_{1:k-1}, \xi)$ is strictly monotonically increasing for all $z_{1}, z_{2}, \ldots, z_{k-1}$. The lower-triangular structure of the \kr offers both theoretical and computational benefits. If the reference density $\eta$ can be factorized, i.e., $\eta(\z) = \prod_{i=1}^m \eta_i(z_i)$, then \cite{marzouk2016sampling} showed that the univariate function $\xi \mapsto S^k(\z_{1:k-1}, \xi)$ resulting from fixing the first $k-1$ entries pushes forward the conditional distribution $\pdf{\Z_k \given \Z_{1:k=1}}(\xi \given \z_{1:k-1})$ to the $k$th component of the reference density, i.e., $S^k(\z_{1:k-1}, \cdot)_\sharp \pdf{\Z_k \given \Z_{1:k=1}}(\xi \given \z_{1:k-1}) = \eta_k$. 
Once the map $\smap$ has been determined, one can easily represent any conditional of the target distribution. In the next section, we will show how to leverage this property to construct the analysis map. We note that monotone lower-triangular transformations are also computationally attractive, as the determinant of their Jacobian reduces to the product of the partial derivatives of each map component with respect to its last entries. Moreover, the inversion of lower-triangular transformations reduces to a sequence of univariate root-finding problems \citep{marzouk2016sampling}. Remarkably, the \kr is known in the Gaussian case (see Remark \ref{remark:KR_gaussian} below) and has recently been identified for multivariate \textit{t}-distributions by \cite{leprovost2023adaptive}.

\begin{remark}\label{remark:KR_gaussian}
Consider $\X \sim \pdf{\X} = \N{\mean{}}{\cov{}}$ and the Cholesky factorization $\lmap^\top \lmap = \cov{}^{-1}$. 
Then $\smap(\x) = \lmap (\x - \mean{})$ is the \kr  that pushes forward $\pdf{\X}$ to $\eta = \N{\zero{n}}{\id{n}}$. 
The proof uses properties on linear transformations of Gaussian variables, and is omitted for brevity. 
We also note that the above transformation is equivalent to the control-variable transform \cite{parrish1992national,menetrier2015overlooked,evensen2022data} in variational methods, used for preconditioning the variational inverse problem and ensuring that the minimal eigenvalue of the Hessian is 1. 
\end{remark}

\subsection{Construction of the analysis map}
\label{subsec:analysis_map}

We now revisit the construction of the analysis map $\tmap$ that pushes forward the  joint forecast distribution $\pdf{(\Y_t, \X_t) \given \Y_{1:t-1} = \ystar_{1:t-1}}$ to the filtering distribution $\pdf{\X_t \given \Y_{1:t} = \ystar_{1:t}}$. 
Henceforth, we omit the time-dependent subscripts on the variables, since the analysis step involves a static Bayesian inverse problem \citep{spantini2022coupling, leprovost2022lowenkf}. 
The joint random variable $(\Y, \X)$ will refer to the joint forecast random variable $(\Y_{t} , \X_{t}) \given \Y_{1:t-1} = \ystar_{1:t-1}$.

Consider the \kr $\smap$ that pushes forward the joint distribution of the observations and states  $\pdf{\Y, \X}$ to the product reference distribution $\eta= \eta_{\Y} \otimes \eta_{\X}$ with $\eta_{\Y}: \real{d} \to \real{}$ and $\eta_{\X}: \real{n} \to \real{}$. 
From its lower triangular structure, $\smap$ can be partitioned into two blocks: 
\begin{equation}\label{eqn:split}
	\smap(\y, \x)=\left[\begin{array}{c}
	\begin{aligned}
		& \smap^{\Yup}(\y) \\
		& \smap^{\Xup}(\y, \x)
	\end{aligned}
	\end{array}\right],
\end{equation}
where $\smap^{\Yup}\colon\real{d} \xrightarrow{} \real{d}$ and $\smap^{\Xup}\colon\real{d} \times \real{n} \xrightarrow{} \real{n}$. 
\cite[Theorem 2.4]{baptista2020conditional} showed that if $\smap$ is lower triangular and pushes forward $\pdf{\Y, \X}$ to a product reference distribution $\eta= \eta_{\Y} \otimes \eta_{\X}$, then $\smap^{\Xup}$ pushes forward $\pdf{\X \given \Y = \y}$ to $\eta_{\X}$. 
As the distributions $\eta_{\Y}$ and $\eta_{\X}$ are user-specified, one can use standard Gaussian distributions. 
Once the map $\smap^{\Xup}$ has been learned, one can easily generate samples from $\pdf{\X \given \Y = \ystar}$ from $\eta_{\X}$. 

In practice, the learned map $\smap^{\Xup}$ is imperfect, such that errors in the empirical map estimate $\widehat{\smap}^{\Xup}$ will be propagated in the approximation of the conditional distribution via the pullback distribution $\widehat{\smap}^{\Xup}(\ystar, \cdot)^{\sharp} \eta_{\X}$. 
To reduce error propagation in the conditional distribution,  \cite{spantini2022coupling} considered a composite analysis map built by partial inversion of the map $\smap^{\Xup}$. 
As a consequence of the pushforward relation, we have that the conditional map $\x \mapsto \smap^{\Xup}(\y, \x)$ is a bijection on $\real{n}$ for any $\y \in \real{d}$. Let $(\y,\x)$ be a joint sample from $\pdf{\Y, \X}$. For a realization $\ystar$ of the observation variable $\Y$ that we want to condition on, there exists a unique element $\x_{a} \in \real{n}$ such that $\smap^{\Xup}(\ystar, \x_{a}) = \smap^{\Xup}(\y, \x)$. 
This element $\x_a$ is precisely the posterior update of the state $\x$ given $\ystar$; see \citep{spantini2022coupling, leprovost2021low}. 
Finally, this yields the following analysis map $\tmap \colon \real{d} \times \real{n} \to \real{n}$:
\begin{equation}
\label{eqn:analysis_map}
    \tmap(\y, \x) = \smap^{\Xup}(\ystar, \cdot)^{-1} \circ \smap^{\Xup}(\y, \x),
\end{equation}
where the notation $\smap^{\Xup}(\ystar, \cdot)^{-1}$ denotes the inversion of the map $\x \mapsto \smap^{\Xup}(\ystar, \x)$ for fixed $\ystar \in \real{d}$. 
We conclude by connecting the Kalman filter and the SMF to the analysis map of \eqref{eqn:analysis_map} in Remarks \ref{remark:enkf} and \ref{remark:smf}.

\begin{remark}[The Kalman filter]\label{remark:enkf}
Consider two random variables $\X \in \real{n}$ and $\Y \in \real{d}$ such that $(\Y, \X)$ is jointly Gaussian, i.e., 
\begin{equation}
    \stack{\Y}{\X} \sim \N{\stack{\meanx}{\mean{\Y}}}{
    \begin{bmatrix}
    \cov{\Y} & \cov{\X, \Y}^\top \\
    \cov{\X, \Y} &  \covx
    \end{bmatrix}}.
\end{equation}
Then for $\y \in \real{d}$, $\pdf{\X \given \Y = \y} = \N{\mean{\X \given \Y = \y}}{\cov{\X \given \Y = \y}}$ with $\mean{\X \given \Y = \y} = \mean{\X} + \cov{\X, \Y} \cov{\Y}^{-1}(\y - \mean{\Y})$ and $\cov{\X \given \Y = \y} = \cov{\X} - \cov{\X, \Y} \cov{\Y}^{-1} \cov{\X, \Y}^\top$. Let $\L_{\X \given \Y = \y}^\top \L_{\X \given \Y = \y} = \cov{\X \given \Y = \y}^{-1}$ be the Cholesky factorization of $\cov{\X \given \Y = \y}^{-1}$. Then $\smap^{\Xup}(\y, \x) = \L_{\X \given \Y = \y}(\x -  \mean{\X \given \Y = \y})$. By applying \eqref{eqn:analysis_map}, we get $\tmap(\y, \x) = \x - \cov{\X, \Y} \cov{\Y}^{-1}(\y - \ystar)$, recovering the analysis map of the Kalman filter as noted by~\cite{spantini2022coupling}. Thus, \eqref{eqn:analysis_map} corresponds to a generalization of the Kalman filter for non-Gaussian joint distribution of the observations and states $\pdf{\Y, \X}$.
\end{remark}

\begin{remark}[The stochastic map filter (SMF)\label{remark:smf}]
The SMF in \citep{spantini2022coupling} builds an estimator $\widehat{\smap}^{\Xup}$ for $\smap^{\Xup}$ from joint forecast samples of the states and observations. The estimator $\widehat{\smap}^{\Xup}$ is based on a parsimonious expansion of radial basis functions estimated by solving decoupled and convex optimization problems. 
The filtering samples are obtained by applying the estimated analysis map of \eqref{eqn:analysis_map} from $\widehat{\smap}^{\Xup}$ to the joint forecast samples. 
Thus, the SMF can be viewed as a nonlinear generalization of the ensemble Kalman filter.
\end{remark}

\section{Linear invariant-preserving analysis maps (\linpam s)}
\label{sec:linPAMs}

In this section, we propose a new methodology for constructing analysis maps that preserve linear invariants of the dynamics model. 
Consider the linear invariants $\invar(\x) = \Uperp^\top \x$. We will design an analysis map $\tmap$ that pushes forward $\pdf{\Y,\X}$ to $\pdf{\X \given \Y = \ystar}$ while ensuring the invariant $\invar$ is preserved, i.e., $\invar(\tmap(\y, \x)) = \invar(\x)$ for all $(\y, \x) \in \real{d} \times \real{n}$.
We call an analysis map satisfying this property a \emph{linear invariant-preserving analysis map (\linpam)}.
Our approach applies to arbitrary nonlinear analysis maps and non-Gaussian joint distributions of the observations and states, $\pdf{\Y, \X}$.

Assume that the matrix $\Uperp \in \real{n \times r}$ has rank $r$. 
Furthermore, without loss of generality (see Remark \ref{rem:assumption_subUnitary} below), we assume that $\Uperp$ is a sub-unitary matrix, i.e., $\Uperp$ has orthonormal columns. 
We can then uniquely decompose the state $\x \in \real{n}$ as 
\begin{equation}
\label{eqn:state_decomposition}
    \x = \Uperp \Uperp^\top  \x + (\id{} - \Uperp \Uperp^\top) \x = \Uperp \xperp \oplus  \Upara \xpara,
\end{equation}
where $\xperp = \Uperp ^\top \x \in \real{r}$ and $\xpara = \Upara^\top \x \in \real{n -r}$.  
One can use a QR factorization of $\Uperp$ to build $\Upara$.
Thus, we have $\x = \U [\xperp; \xpara]$, or, equivalently, $[\xperp; \xpara] = \U^{-1} \x = \U^\top \x$. 
The decomposition \eqref{eqn:state_decomposition} allows us to reformulate the problem of building an analysis map that preserves linear invariants $\invar(\x) = \Uperp^\top \x$ as building an analysis map that preserves the state components in the span of $\Uperp$. 

\begin{remark}\label{rem:assumption_subUnitary} 
    Consider a linear constraint $\Uperp^\top \x = \C_{\perp} \in \real{r}$, where $\Uperp^\top$ is not necessarily a sub-unitary matrix. 
    We can then use a thin QR factorization $\Uperp = \BB{Q}_{\perp} \BB{R}_{\perp}$, where $\BB{Q}_{\perp} \in \real{n \times r}$ is sub-unitary and $\BB{R}_{\perp} \in \real{r \times r}$ is lower triangular \citep{golub2013matrix}, to rewrite the linear constraint as $\BB{Q}_{\perp}^\top \x = \BB{R}_{\perp}^{-\top} \C_{\perp} \in \real{r}$. 
    Notably, $\BB{Q}_{\perp}^\top$ is a a sub-unitary matrix, i.e., $\BB{Q}_{\perp}^\top$ has orthonormal columns. 
\end{remark}

\subsection{Formulation of the analysis map in the rotated space \label{subsec:analysis_rotatedspace}}

Building upon \eqref{eqn:state_decomposition}, we can apply the change of coordinates $(\Y, \X) \mapsto (\Y, \Xperp, \Xpara)$ with $\Xperp = \Uperp^\top$ and $\Xpara = \Upara^\top$ to express the analysis map of \eqref{eqn:analysis_map} in the rotated space. 
The distribution $\pdf{\Xperp, \Xpara}$ is given by the pushforward of $\pdf{\X}$ by the linear transformation $\U$, i.e., $\pdf{\Xperp, \Xpara} = \U_{\sharp} \pdf{\X}$. Using the push forward formula \eqref{eqn:pushforward}, we thus have 
\begin{equation}
    \pdf{\Xperp, \Xpara}(\xperp, \xpara) 
        = \pdf{\X}(\U [\xperp; \xpara]) \det \nabla (\U [\xperp; \xpara]) 
        = \pdf{\X}(\Uperp \Uperp^\top \x +  \Upara \Upara^\top \x) 
        = \pdf{\X}(\x).
\end{equation}
Hence, we have the following factorization of $\pdf{\Y,\X}$:
\begin{equation}
\label{eqn:factorization_joint}
\begin{aligned}
\pdf{\Y, \X}(\y, \x) = \pdf{\Y, \Xperp, \Xpara}(\y, \xperp, \xpara) = \pdf{\Y}(\y) \pdf{\Xperp \given \Y}(\xperp \given \y) \pdf{\Xpara \given \Y, \Xperp}(\xpara \given \y, \xperp)
\end{aligned}
\end{equation}
Let us now consider the \kr $\smap \colon \real{d} \times \real{r} \times \real{n - r}$ that pushes forward $\pdf{\Y, \Xperp, \Xpara}$ to the product of standard Gaussian references $\eta_{\Y} \otimes \eta_{\Xperp} \otimes \eta_{\Xpara}$, where $\eta_{\Y}, \eta_{\Xperp}, \eta_{\Xperp}$ is defined on $\real{d}, \real{r}, \real{n-r}$, respectively. 
From its lower triangular structure, we can partition $\smap$ as
\begin{equation}
    \smap(\y, \xperp, \xpara) = \left[\begin{array}{c}
\begin{aligned}
        & \smap^{\Yup}(\y) \\
        & \smap^{\Xperpup}(\y, \xperp) \\
        & \smap^{\Xparaup}(\y, \xperp, \xpara)
\end{aligned}
\end{array}\right],
\end{equation}
where $\smap^{\Yup} \colon \real{d} \to \real{d}, \smap^{\Xperpup} \colon \real{d} \times \real{r} \to \real{r}, \text{and } \smap^{\Xparaup} \colon \real{d} \times \real{r} \times \real{n - r} \to \real{n-r}$. 
From the lower block structure of $\smap$ and \cite[Theorem 2.4]{baptista2020conditional}, we have the following relations between the conditionals of $\pdf{\Y, \Xperp, \Xpara}$ and the marginals of $\eta$:
\begin{equation}
    \begin{aligned}
        &\push{\smap^{\Yup}} \pdf{\Y} = \eta_{\Y},\\
        &\push{\smap^{\Xperpup}} \pdf{\Xperp \given \Y} = \eta_{\Xperp},\\
        &\push{\smap^{\Xparaup}} \pdf{\Xpara \given \Y, \Xperp} = \eta_{\Xpara}
    \end{aligned}
    \label{eqn:relation_conditional_marginal}
\end{equation} 
Using the relations in \eqref{eqn:relation_conditional_marginal}, we proceed as follows to sample from the conditional distribution $\pdf{\X \given \Y = \ystar}$, or equivalently $\pdf{\Xperp, \Xpara \given \Y = \ystar}$. Let $(\y, \xperp, \xpara)$ be a joint sample from $\pdf{\Y, \Xperp, \Xpara}$. Following the derivation of the analysis map in Section \ref{subsec:analysis_map}, we generate samples from $\pdf{\Xperp \given \Y = \ystar}$ by seeking the solution $\x_{\perp, a} \in \real{r}$ of 
\begin{equation}
\smap^{\Xperpup}(\ystar, \x_{\perp, a}) = \smap^{\Xperpup}(\y, \xperp).
\end{equation}
Formally, we therefore get the analysis map $\tmap^{\perp}: \real{d} \times \real{r} \to \real{r}$ that pushes forward  $\pdf{\Y, \Xperp}$ to $\pdf{\Xperp \given \Y = \ystar}$ as
\begin{equation}
    \label{eqn:tmap_perp}
    \tmap^{\perp}(\y,  \xperp) = \smap^{\Xperpup}(\ystar, \cdot)^{-1} \circ \smap^{\Xperpup}(\y, \xperp).
\end{equation}
Then we generate samples from $\pdf{\Xpara \given \ystar, \x_{\perp, a}}$ by seeking the solution $\x_{\parallel, a} \in \real{n-r}$ of 
\begin{equation}
\smap^{\Xparaup}(\ystar, \x_{\perp, a}, \x_{\parallel, a}) = \smap^{\Xparaup}(\y, \xperp, \xpara).
\end{equation}
Similar to \eqref{eqn:tmap_perp}, we write the analysis map $\T_{\ystar, \x_{\perp, a}}^{\parallel}: \real{d} \times \real{r} \times \real{n -r }\to \real{n - r}$ that pushes forward $\pdf{\Y, \Xpara, \Xperp}$ to $\pdf{\Xpara \given \Y = \ystar, \Xperp = \x_{\perp, a}}$ as
\begin{equation}
    \label{eqn:tmap_para}
    \T_{\ystar, \x_{\perp, a}}^{\parallel}(\y,  \xperp, \xpara) = \smap^{\Xparaup}(\ystar, \tmap^{\perp}(\y,  \xperp), \cdot)^{-1} \circ \smap^{\Xparaup}(\y, \xperp, \xpara).
\end{equation}
Thus, the analysis map $\tmap \colon \real{d} \times \real{n} \to \real{n}$ that pushes forward $\pdf{\Y, \X}$ to $\pdf{\X \given \Y = \ystar}$ is
\begin{equation}
\label{eqn:analysis_map_block}
    \tmap(\y, \x) = \begin{bmatrix}
        \Uperp, \Upara
    \end{bmatrix} \begin{bmatrix}
        \x_{\perp, a} \\
        \x_{\parallel, a}
    \end{bmatrix} = \Uperp \tmap^{\perp}(\y, \Uperp^\top \x) + \Upara \T_{\ystar, \x_{\perp, a}}^{\parallel}(\y, \Uperp^\top \x, \Upara^\top \x).
\end{equation}
Notably, \eqref{eqn:analysis_map_block} provides an alternative formulation for the analysis map in \eqref{eqn:analysis_map} by building analysis maps $\tmap^{\perp}$ and  $\T^\parallel_{\ystar, \x_{\perp, a}}$ in the spaces spanned by the columns of $\Uperp$ and $\Upara$. We stress that \eqref{eqn:analysis_map_block} and \eqref{eqn:analysis_map} are strictly equivalent as \eqref{eqn:analysis_map_block} relies on the factorization of $\pdf{\Y, \X}$ into $\pdf{\Y} \pdf{\Xperp \given \Y} \pdf{\Xpara \given \Xperp, \Y}$ instead of $\pdf{\Y} \pdf{\X \given \Y}$ for \eqref{eqn:analysis_map}. 
Interestingly, \eqref{eqn:analysis_map_block} proposes to perform the inference by first rotating the state along the columns of $\U$, then performing the inference for $\Xperp$ followed by $\Xpara$, and finally lifting the result to the original space.

\subsection{Preserving the linear invariants in the rotated space \label{subsec:constrained_analysismap}}

The state decomposition \eqref{eqn:state_decomposition} suggests that preserving the linear invariants $\invar(\x) = \Uperp^\top \x$ corresponds to generating posterior samples from $\pdf{\X \given \Y = \ystar, \Xperp = \xperp}$. 
To do so, we modify the analysis map \eqref{eqn:analysis_map_block} by omitting the update of $\Xperp$.  
This is equivalent to constraining the analysis map $\constmap^{\perp}$ to be the identity, i.e., $\constmap^{\perp}(\y, \xperp) = \xperp$. 
In this way, we obtain the \cons analysis map $\constmap^{\parallel}$ as
\begin{equation}\label{eqn:constrained_tmap_para}
    \constmap^{\parallel}(\y, \xperp, \xpara) = \smap^{\Xparaup}(\ystar, \xperp, \cdot)^{-1} \circ \smap^{\Xparaup}(\y, \xperp, \xpara).
\end{equation}
Note that the second argument of ${\smap^{\Xparaup}}^{-1}$ is $\xperp$ in \eqref{eqn:constrained_tmap_para} instead of $\tmap^{\perp}(\y,  \xperp)$ as in \eqref{eqn:tmap_para}. 
Finally, the analysis map $\constmap$ formulated in the original space preserving the invariant $\invar(\x) = \Uperp^\top \x$ reads 
\begin{equation}
\label{eqn:constrained_analysis_map}
\begin{aligned}
        \constmap(\y, \x) & = \Uperp \constmap^{\perp}(\y, \Uperp^\top \x) + \Upara \constmap^{\parallel}(\y, \Uperp^\top \x, \Upara^\top \x)\\
        & =  \Uperp \Uperp^\top \x  + \Upara \constmap^{\parallel}(\y, \Uperp^\top \x, \Upara^\top \x).
\end{aligned}
\end{equation}
The analysis maps \cref{eqn:constrained_tmap_para,eqn:constrained_analysis_map} form the cornerstone of the proposed \linpam \ methodology. 

In practice, an empirical estimator $\widehat{\BB{T}}_{\ystar}^{\parallel}$ for the constrained analysis map $\constmap^{\parallel}$ of \eqref{eqn:constrained_tmap_para} is built at each assimilation cycle from joint forecast samples $\{\iup{\y}, \iup{\x} \}$ of $\pdf{(\Y_t, \X_t) \given \Y_{1:t-1} = \ystar_{1:t-1}}$. Then, we obtain the resulting constrained estimator $\widehat{\BB{T}}_{\ystar}$ for the \linpam\ $\constmap$ of \eqref{eqn:constrained_analysis_map} by replacing  $\constmap^{\parallel}$  with $\widehat{\BB{T}}_{\ystar}^{\parallel}$. Remark \ref{remark:empirical_linpam} shows that an empirical estimator $\widehat{\BB{T}}_{\ystar}$ of a \linpam\ still preserves linear invariants. Remark \ref{remark:constant_prior} discusses the construction of \linpam\ of \eqref{eqn:constrained_analysis_map} when the linear invariants are constant over the prior. 

\begin{algorithm}[h]
\caption{\texttt{consSMF}$(\ystar, \pdf{\Y \mid \X = \cdot}, \{\x^i\})$ assimilates $\ystar$ into $\{\x^i\}_{i=1}^M$ while preserving linear invariants}
\label{algo:nonlinear_constrained}
\begin{algorithmic}[1]
	\State \textbf{Input:} $\ystar \in \mathbb{R}^d$, likelihood model $\pdf{\Y \mid \X = \cdot}$, samples $\{\x^i\}_{i=1}^M$ from $\pdf{\X}$
	\State \textbf{Output:} Samples $\{\x_a^i\}_{i=1}^M$ from $\pdf{\X \mid \Y = \ystar}$
	\State{Generate $M$ likelihood samples $\{ \y^i \}_{i=1}^M$ by drawing from $\pdf{\Y \mid \X = \x^i}$}
	\State{Project forecast samples onto $\Uperp$ and $\Upara$: $\xperp^i = \Uperp^\top \x^i$ and $\xpara^i = \Upara^\top \x^i$}
	\State{Estimate the transport map $\smap^{\Xparaup}$ from samples $\{ (\y^i, \xperp^i, \xpara^i) \}_{i=1}^M$}
	\State{Perform \texttt{cons} analysis in transformed space: $$\x_{\parallel, a}^i = \constmap^\parallel(\y^i, \xperp^i, \xpara^i) = \left[\smap^{\Xparaup}(\ystar, \xperp^i, \cdot)\right]^{-1} \circ \smap^{\Xparaup}(\y^i, \xperp^i, \xpara^i)$$}
	\State{Lift result to original space: $\x_a^i = \Uperp \xperp^i + \Upara \x_{\parallel, a}^i$}
	\State \Return $\{\x_a^i\}_{i=1}^M$
\end{algorithmic}
\end{algorithm}

\begin{figure}[tb]
    \centering
    \resizebox{0.75\textwidth}{!}{%
    \begin{overpic}[width = 0.8\linewidth]{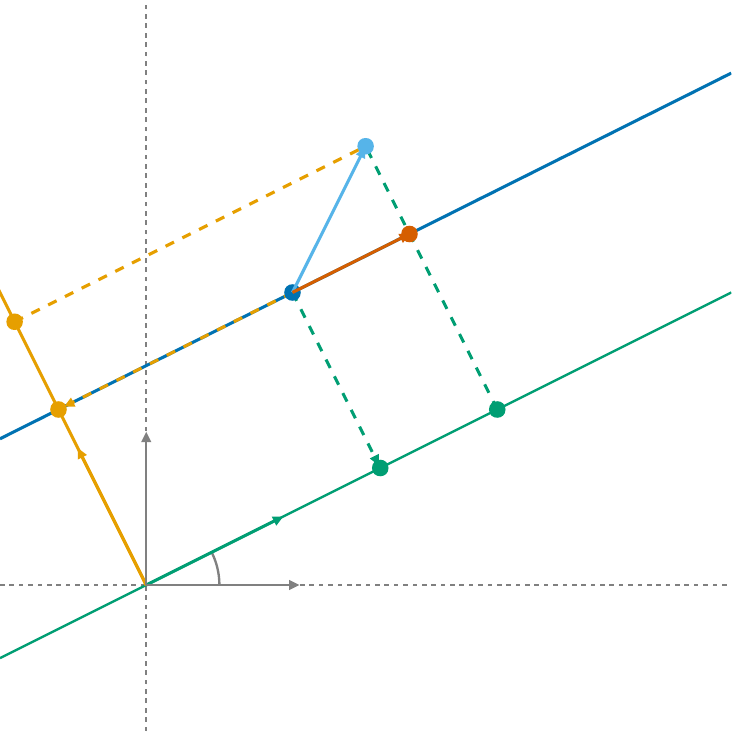}
    
    \put(44,58){$\x$}
    \put(48,82){$\tmap(\y, \x)$}
    \put(54,72){$\constmap(\y, \x)$}

    \put(16,56){Project on $\operatorname{span}(\Uperp)$}
    \put(13,42){$ \Uperp \xperp$}
    \put(4,64){$\Uperp \tmap^{\perp}(\y, \xperp)$}

    \put(48,50){Project on $\operatorname{span}(\Upara)$}
    \put(42,38){$\Upara \xpara$}
    \put(62,38){$\Upara \constmap^{\parallel}(\y, \xperp, \xpara)$}

    \put(32,22){Rotation by $\U$}
    \end{overpic}
    }
    \caption{
        Schematic of the inference with the \uncons and \cons analysis maps $\tmap$ and $\constmap$ in \cref{eqn:analysis_map_block,eqn:constrained_analysis_map}.
        \textcolor{LightBlue}{For the \uncons map $\tmap$}, we (i) construct $\tmap$ and (ii) \textcolor{LightBlue}{apply $\tmap$ to get the update $\x_a = \tmap(\y, \x)$}. 
        \textcolor{MainOrange}{For the \cons map $\constmap$}, we
        (i) \textcolor{AccentOrange}{project $\x$ onto the columns of $\Uperp$} \textcolor{MainGreen}{and $\Upara$}, 
        (ii) construct $\constmap^{\parallel}$, 
        (iii) \textcolor{MainGreen}{apply $\constmap^{\parallel}$ to get the update $\x_{\parallel, a} = \constmap^{\parallel}(\y, \xperp, \xpara)$ in the span of $\Upara$},  
        and (iv) \textcolor{MainOrange}{lift the result to the original space}. 
        Notably, the \textcolor{LightBlue}{update of the \uncons map $\tmap$} is outside the \textcolor{MainBlue}{invariant-preserving subspace}, while the \textcolor{MainOrange}{update for the \cons map $\constmap$} remains on the \textcolor{MainBlue}{invariant-preserving subspace}.
    }
    \label{fig:constrainedDA}
\end{figure}

Algorithm \ref{algo:nonlinear_constrained} presents pseudo-code for one analysis step of the \cons SMF (\consSMF). 
This algorithm transforms a set of forecast samples into a set of filtering samples by assimilating the observational data $\ystar$ into the prior samples $\{\x^i\}_{i=1}^M$ while preserving linear invariants.  
Moreover, Figure \ref{fig:constrainedDA} illustrates how the \uncons and \cons analysis maps operate.
Finally, see \citep{spantini2022coupling} for more details on the estimation of the \kr---underpinning the SMF used in this work---that pushes forward the joint forecast distribution $\pdf{(\Y_t, \X_t) \given \Y_{1:t-1} = \ystar_{1:t-1}}$ to the standard Gaussian distribution. 

A few remarks are in order. 

\begin{remark}[Exact Bayesian updates preserve invariants]
	Suppose the dynamical system's invariants are constant over the support of the prior. 
	In that case, the Bayesian update guarantees that the invariants are also constant over the posterior (see \ref{apx:invariants}). 
	In this setting, violations of the invariants are thus entirely due to the approximate treatment of the inference problem. 
	In this work, we are thus interested in approximate inference algorithms that preserve this critical property of Bayes' rule.
\end{remark}

\begin{remark}[Empirical estimators of \linpam s preserve linear invariants]\label{remark:empirical_linpam}
	Let $\widehat{\BB{T}}_{\ystar}^{\parallel}$ be an empirical estimator for the constrained analysis map $\constmap^{\parallel}$ of \eqref{eqn:constrained_tmap_para}, and $\widehat{\BB{T}}_{\ystar}$ be the resulting estimator of the \linpam\ $\constmap$ of \eqref{eqn:constrained_analysis_map}. 
	In practice, the estimator $\widehat{\BB{T}}_{\ystar}^{\parallel}$ is imperfect as we estimate it from samples $\{(\iup{\y}, \iup{\x})\}$ of an approximation of the joint forecast distribution $\pdf{(\Y_t, \X_t) \given \Y_{1:t-1} = \ystar_{1:t-1}}$, and the realization $\ystar_t$ to assimilate can originate from an approximation of the observation distribution $\pdf{\Y_t}$. 
	Independent of these discrepancies and the quality of the estimator, $\widehat{\BB{T}}_{\ystar}$ preserves linear invariants. 
	To prove this, we introduce the class of transformations $ \mathcal{T}_{\text{LinP}} = \{ \BB{T} \colon \real{d} \times \real{n} \to \real{n} \given \BB{T}(\y, \x) = \Uperp \Uperp^\top \x  + \Upara \BB{K}(\y, \x) \}$ with $\BB{K}(\y, \x) = \widehat{\BB{T}}_{\ystar}^{\parallel}(\y, \Uperp^\top \x, \Upara^\top \x)$. 
	By noting that any estimator $\widehat{\BB{T}}_{\ystar}$ belongs to $\mathcal{T}_{\text{LinP}}$ and that elements of $\mathcal{T}_{\text{LinP}}$ preserve the linear invariants $\invar(\x) = \Uperp^\top \x$, we obtain the desired result.
\end{remark}

\begin{remark}[Treatment of constant linear invariants over the prior]\label{remark:constant_prior}
Assume that the linear invariants are constant over the prior $\pdf{\X}$, \ie $\Uperp^\top \x = \C \in \real{r}$ for any realization $\x \in \real{n}$ of $\X$. 
Thus, the prior $\pdf{\X}$ is not supported on $\real{n}$ but on the affine space $\{ \x \in \real{n} \given \Uperp^\top \x = \C \}$. 
This remark explains how to adapt the construction of \linpam \ of \eqref{eqn:constrained_analysis_map} to this setting. 
Let $\x \in \real{n}$ be a realization of $\X$. 
From the state decomposition \eqref{eqn:state_decomposition}, we have $\x = \Uperp \C + \Upara \xpara$ with $\xpara \in \real{n-r}$. 
The distribution $\pdf{\Uperp^\top \X}$ becomes a point mass centered at $\C$, \ie $\pdf{\Uperp^\top \X}(\xperp) = \pdf{\Xperp}(\xperp) = \delta(\C - \xperp)$, where $\delta$ denotes the Dirac delta distribution. 
Thus, we can omit the rotated state variable $\Xperp$ from the analysis and operate on the reduced joint space $(\Y, \Upara^\top \X) = (\Y, \Xpara)$. 
We define the reduced map $\smap^{\Xparaup}_{\text{reduced}}$ pushing forward $\pdf{\Y, \Xpara}$ to $\eta_{\Xpara}$ as  $\smap^{\Xparaup}_{\text{reduced}}\colon \real{d} \times \real{n -r} \to \real{n-r}, \; (\y, \xperp) \mapsto \smap^{\Xparaup}_{\text{reduced}}(\y, \xpara)$. 
We stress that we don't need to define a reduced map $\smap^{\Xperpup}_{\text{reduced}}$ and that $\smap^{\Xparaup}_{\text{reduced}}$ does not depend on the rotated state coordinates $\xperp$. The reduced constrained analysis map $\widetilde{\BB{T}}^\parallel_{\ystar, \text{reduced}} \colon \real{d} \times \real{n -r} \to \real{n-r}$ reads $\widetilde{\BB{T}}^\parallel_{\ystar, \text{reduced}}(\y, \xpara) = \smap^{\Xparaup}_{\text{reduced}}(\ystar, \cdot)^{-1} \circ \smap^{\Xparaup}_{\text{reduced}}(\y, \xpara)$. 
Finally, $\widetilde{\BB{T}}_{\ystar, \text{reduced}} \colon \real{d} \times \real{n} \to \real{n}$ formulated in the original space becomes
\begin{equation}
\label{eqn:constrained_analysis_map_constant}
        \widetilde{\BB{T}}_{\ystar, \text{reduced}}(\y, \x)  =  \Uperp \C  + \Upara \widetilde{\BB{T}}^\parallel_{\ystar, \text{reduced}}(\y, \Upara^\top \x).
\end{equation}
Thus, the state update is confined to the affine space $\{ \x \in \real{n} \given \Uperp^\top \x = \C \}$, as we only update the state coordinates $\Upara^\top \x$ spanning this subspace.
\end{remark}
\section{Specialization to the Gaussian case \label{sec:gaussian}}

We now specialize the \cons analysis map $\constmap$ of \eqref{eqn:constrained_analysis_map} for the Gaussian case. 
Let $\Y \in \real{d}$ and $\X \in \real{n}$ be random variables that are jointly Gaussian, i.e.,
\begin{equation}
    \stack{\Y}{\X} \sim \N{\stack{\meanx}{\meany}}{
    \begin{bmatrix}
    \covy & \cov{\X, \Y}^\top \\
    \cov{\X, \Y} &  \covx
    \end{bmatrix}}.
\end{equation}
We define the change of coordinates $\widetilde{\U}$ transforming $(\Y, \Xperp)$ into $(\Y, \Xperp, \Xpara)$ as
\begin{equation}
    \widetilde{\U}^\top = \begin{bmatrix}
        \id{d} & \zero{d \times n}\\
        \zero{d \times n} & \U^\top
    \end{bmatrix} = \begin{bmatrix}
        \id{d} & \zero{d \times n}\\
        \zero{d \times r} & \Uperp^\top\\
        \zero{d \times n-r} & \Upara^\top
        \end{bmatrix}
\end{equation}
and note that $(\Y, \Xperp, \Xpara)$ is also Gaussian with statistics
\begin{equation}
     \begin{bmatrix}
         \Y \\
         \Xperp\\
         \Xpara
     \end{bmatrix} = \widetilde{\U}^\top \YX \sim \N{\begin{bmatrix}
         \meany \\
         \mean{\Xperp}\\
         \mean{\Xpara}
     \end{bmatrix}}{
    \begin{bmatrix}
    \covy & \cov{\Xperp, \Y}^\top &  \cov{\Xpara, \Y}^\top \\
    \cov{\Xperp, \Y} & \cov{\Xperp} &  \cov{\Xperp, \Xpara}^\top\\
    \cov{\Xpara, \Y} & \cov{\Xperp, \Xpara} &  \cov{\Xpara}
    \end{bmatrix}}.
\end{equation}
For a generic $m$-dimensional Gaussian random variable $\Z \sim \N{\meanz}{\covz}$, we denote by $\lmap_{\Z} \in \real{m \times m}$ the lower Cholesky factor of its inverse covariance matrix, i.e., $\cov{\Z}^{-1} = \lmap_{\Z}^\top \lmap_{\Z}$. 
To construct the \kr $\smap$ that pushes forward $\pdf{\Y, \Xperp, \Xpara}$ to the standard Gaussian distribution, we introduce the Cholesky factors $\lmap_{\Y} \in \real{d \times d}, \lmap_{\Xperp \given \Y} \in \real{r \times r}, \lmap_{\Xpara \given \Y, \Xperp} \in \real{(n-r) \times (n-r)}$ with $\covy^{-1} = \lmap_{\Y}^\top \lmap_{\Y}$, $\cov{\Xperp \given \Y}^{-1} = \lmap_{\Xperp \given \Y}^\top \lmap_{\Xperp \given \Y}$, and $\cov{\Xpara \given \Y, \Xperp}^{-1} = \lmap_{\Xpara \given \Y, \Xperp}^\top \lmap_{\Xpara \given \Y, \Xperp}$, respectively. 
Using classical results on the conditional mean of Gaussian random variables, we obtain the \kr $\smap$ as 
\begin{equation}
\label{eqn:kr_gaussian}
    \smap(\y, \xperp, \xpara) = \left[\begin{array}{c}
\begin{aligned}
        & \smap^{\Yup}(\y) \\
        & \smap^{\Xperpup}(\y, \xperp) \\
        & \smap^{\Xparaup}(\y, \xperp, \xpara)
\end{aligned}
\end{array}\right] = \left[\begin{array}{c}
\begin{aligned}
        & \lmap_{\Y}(\y - \meany) \\
        & \lmap_{\Xperp \given \Y} \left(\xperp - \mean{\Xperp \given \Y}  \right) \\
        & \lmap_{\Xpara \given \Y, \Xperp}(\xpara - \mean{\Xpara \given \Y, \Xperp})
\end{aligned}
\end{array}\right].
\end{equation}
By specializing the formulas \eqref{eqn:tmap_perp} and \eqref{eqn:tmap_para} to the triangular map $\smap(\y, \xperp, \xpara)$ of \eqref{eqn:kr_gaussian}, we obtain
\begin{equation}
\label{eqn:gaussian_analyssis_map_decoupled}
    \begin{aligned}
        \tmap^{\perp}(\y,  \xperp) 
            & = \xperp - \cov{\Xperp, \Y} \covy^{-1}(\y - \ystar), \\
        \tmap^{\parallel}(\y,  \xperp, \xpara) 
            & = \xpara - \cov{\Xpara, \Y} \covy^{-1}(\y - \ystar).
    \end{aligned}
\end{equation}
Note that the formulas for $\tmap^{\perp}$ and $\tmap^{\parallel}$ correspond to the Kalman filter update in the subspaces spanned by the columns of $\Uperp$ and $\Upara$, respectively. 
Remarkably, despite the recursive update of the state components (first updating $\xperp$ then $\xpara$) in the generic nonlinear case, the analysis map $\tmap^{\parallel}(\y,  \xperp, \xpara)$ in the Gaussian case does not depend on $\xperp$. 
This suggests that we can fully decouple the update of the different state components with the Kalman filter, echoing a similar conclusion in the context of Kalman smoothers by \cite{ramgraber2023_smoothing_part1}. 
The analysis map in the original space $\tmap$ now reads
\begin{equation}
\label{eqn:gausssian_analysis_map}
\begin{aligned}
       \tmap(\y, \x) & =  \Uperp \tmap^{\perp}(\y,  \xperp) + \Upara \tmap^{\parallel}(\y,  \xperp, \xpara)\\
                  & =  \Uperp \left(\xperp - \cov{\Xperp, \Y} \covy^{-1}(\y - \ystar) \right) + \Upara (\xpara - \cov{\Xpara, \Y} \covy^{-1}(\y - \ystar)) \\
                  & = \Uperp (\Uperp^\top \x)  + \Upara (\Upara^\top \x) -  \Uperp \Uperp^\top \cov{\X, \Y} \covy^{-1}(\y - \ystar) - \Upara \Upara^\top \cov{\X, \Y} \covy^{-1}(\y - \ystar)\\
                  & = \x - \cov{\X, \Y}\covy^{-1}(\y - \ystar), 
\end{aligned}
\end{equation}
where the last two equations rely on $\xperp = \Uperp^\top \x$ and $\U = [\Uperp, \Upara]$ being orthonormal. 
As noted in the previous section, rotating the state components, performing inference in the new coordinates, and lifting the result back to the original space is strictly equivalent to performing inference in the original space. 
Thus, it is natural that the analysis map of \eqref{eqn:analysis_map_block} in the Gaussian case reverts to the Kalman filter update. Similarly, we obtain the \cons analysis map $\constmap$ preserving the invariant $\invar(\x) = \Uperp^\top \x$ as 
\begin{equation}
\label{eqn:constrained_gausssian_analysis_map}
\begin{aligned}
       \constmap(\y, \x) & =  \Uperp \xperp + \Upara \tmap^{\parallel}(\y,  \xperp, \xpara)\\
                  & = \x - \Upara\Upara^\top \cov{\X, \Y}\covy^{-1}(\y - \ystar)\\
                  & = \x - (\id{} - \Uperp\Uperp^\top) \cov{\X, \Y}\covy^{-1}(\y - \ystar).
\end{aligned}
\end{equation}
We conclude this section with several comments on this last result. 

\begin{remark}
	First, the \cons analysis map corresponds to a Kalman-like update where the Kalman gain is projected on the orthonormal complement of $\Uperp$, namely on the space spanned by the columns of $\Upara$. 
	Interestingly, \eqref{eqn:constrained_gausssian_analysis_map} establishes an equivalence between an ``embedding approach'' and a ``projective approach'' \citep{simon2010kalman}. 
	In the first case, we perform a \cons inference in the rotated space before lifting the result back. 
	In the latter case, we ``naively'' project the state's update $-\cov{\X, \Y}\covy^{-1}(\y - \ystar)$ onto the columns of $\Upara$ to ensure that the invariant is unchanged. 
	While there is no reason for these two approaches to coincide in the non-Gaussian case with arbitrary invariants, \eqref{eqn:constrained_gausssian_analysis_map} states that they are equivalent in the Gaussian case with linear invariants $\invar$. 
\end{remark}

\begin{remark}
	In the case where the linear invariants are constant over the prior---see Remark \ref{remark:constant_prior} for the general treatment---we can specialize the reduced transformations of Remark \ref{remark:constant_prior} to derive a reduced constrained analysis map $\widetilde{\BB{T}}_{\ystar, \text{reduced}}$ of \eqref{eqn:constrained_analysis_map_constant} in the Gaussian case. 
	One can show that the resulting map is identical to the fully supported case, since the unconstrained map $\tmap^{\parallel}(\y,  \xperp, \xpara)$ of \eqref{eqn:gaussian_analyssis_map_decoupled} does not depend on $\xperp$. 
	However, for brevity, we omit details.
\end{remark}
\section{Preserving linear invariants with Kalman filters \label{sec:invariants_kalman}} 

It is often claimed that the vanilla Kalman filter and its Monte Carlo approximation, namely the ensemble Kalman filter, preserve linear invariants. 
This section addresses this claim---and its important nuances---in two particular scenarios. 
In the first scenario, we assume that the linear invariants are constant over the prior distribution $\pdf{\X}$, \ie $\invar(\x) = \C \in \real{r}$ for any realization $\x$ of $\X$. 
In the second scenario, we assume that the expected linear invariants of the prior distribution are known, \ie $\E{\pdf{\X}}{\invar(\x)} = \C \in \real{r}$. 
This can model a scenario in which the true invariant is known with some uncertainty. 

In the first case, where the linear invariants are constant over the prior distribution, we show below that the ensemble Kalman filter preserves linear invariants when \emph{no regularization} is applied.
Importantly, the empirical Kalman gain often needs to be regularized with covariance tapering or inflation to alleviate undesired sampling effects such as rank deficiency, spurious long-range correlations, and underestimation of the statistics.
However, such modifications are known to break the linear invariant-preserving properties of the vanilla ensemble Kalman filter \citep{janjic2014conservation}. 
We present a specific numerical example where the regularized \enkf (with inflation
and covariance tapering) does not preserve linear invariants in Section \ref{subsec:linad} for a linear advection equation.
(See Figure \ref{fig:linadvection_mass}.)

In the second scenario, where the expected linear invariants of the prior distribution are known, the ensemble Kalman filter no longer necessarily preserves linear invariants.
In contrast, the \cons Kalman filter and the \cons ensemble Kalman filter, presented in the previous section, intrinsically preserve linear invariants in both of the above scenarios.

\subsection{Scenario 1: The linear invariants are constant over the prior distribution}
\label{subsec:constant_scenario}

Assume that the linear invariants are constant over the prior distribution $\pdf{\X}$, say $\invar(\x) = \C \in \real{r}$ for any realization $\x$ of $\X$. 
This assumption implies $\E{\pdf{\X}}{\invar(\x)} = \C$. 
Recall that the analysis map of the Kalman filter $\tmapkf$ is given in \eqref{eqn:gausssian_analysis_map} as $\tmapkf(\y, \x) = \x - \cov{\X, \Y} \cov{\Y}^{-1}(\y - \y^\star)$, where $\cov{\X, \Y}$ denotes the cross-covariance between the state $\X$ and the observation $\Y$. 
Hence, we get 
\begin{equation}\label{eq:sce1}
\resizebox{.9\textwidth}{!}{$\displaystyle 
\begin{aligned}
    \invar(\tmapkf(\y, \x)) 
    & = \Uperp^\top \tmapkf(\y, \x) \\ 
    & = \Uperp^\top \x -  \Uperp^\top \cov{\X, \Y} \cov{\Y}^{-1}(\y - \y^\star) \\
    & = \Uperp^\top \x - \Uperp^\top \left[ \int \left(\x' - \mean{\X} \right)(\y' - \mean{\Y})^\top\d \pdf{\Y, \X}(\y', \x')  \right]\cov{\Y}^{-1}(\y - \ystar) \\
    &  = \Uperp^\top \x -  \left[ \int  \Uperp^\top \left(\x' - \mean{\X} \right) \pdf{\X}(\x') \left[ \int  \pdf{\Y \given \X}(\y' \given \x')  (\y' - \mean{\Y})^\top\d \y' \right] \d \x'\right] \cov{\Y}^{-1}(\y - \ystar).
\end{aligned}
$}
\end{equation}
Furthermore, we have 
$\Uperp^\top \x' \pdf{\X}(\x') = \C \pdf{\X}(\x')$ for any $\x' \in \real{n}$. 
Similarly, we have $\Uperp^\top \mean{\X} = \Uperp^\top \E{\pdf{\X}}{\x} = \E{\pdf{\X}}{\Uperp^\top \x} = \E{\pdf{\X}}{\invar(\x)} = \C$ and thus
$\Uperp^\top \mean{\X} \pdf{\X}(\x') = \C \pdf{\X}(\x')$ for any $\x' \in \real{n}$.
Hence, \eqref{eq:sce1} becomes  
\begin{equation}
\begin{aligned}
    \invar(\tmapkf(\y, \x)) 
    & = \Uperp^\top \x -  \left[ \int  \C \pdf{\X}(\x') - \C \pdf{\X}(\x')   \left[ \int  \pdf{\Y \given \X}(\y' \given \x')  (\y' - \mean{\Y})^\top\d \y' \right] \d \x'\right] \cov{\Y}^{-1}(\y - \ystar) \\
    & = \Uperp^\top \x - \zero{r} \\
    &  = \invar(\x).
\end{aligned}
\end{equation}
This implies that the invariant is preserved by $\tmapkf$ if the invariant is constant over $\pdf{\X}$. 
We note that a similar derivation holds for the analysis map of the \textit{ensemble} Kalman filter, where the covariance matrices $\cov{\X, \Y}$ and $\cov{\Y}$ are replaced by their empirical counterparts estimated from samples $\{ (\iup{\x}, \iup{\y}) \}$. 
Finally, we stress that this result holds for any joint distribution $\pdf{\Y, \X}$ with finite second-order moments.

\subsection{Scenario 2: The expected value of the linear invariants over the prior is known}

We now make the weaker assumption that the prior \textit{expectations} of the invariants, i.e., $\E{\pdf{\X}}{\invar(\x)} = \C \in \real{r}$, are known. 
We investigate whether the Kalman filter preserves the expected linear invariants by checking if $\E{\pdf{\Y, \X}}{\invar(\tmapkf(\y, \x))} = \C$ is satisfied. 
To this end, note that \eqref{eqn:gausssian_analysis_map} implies
\begin{equation}\label{eq:sce2}
\begin{aligned}
    \E{\pdf{\Y, \X}}{\invar(\tmapkf(\y, \x))} 
        & = \E{\pdf{\Y, \X}}{\Uperp^\top \x} -  \E{\pdf{\Y, \X}}{\Uperp^\top \cov{\X, \Y} \cov{\Y}^{-1}(\y - \y^\star)} \\
        & = \C - \Uperp^\top \cov{\X, \Y} \cov{\Y}^{-1} (\mean{\Y} - \ystar).
\end{aligned}
\end{equation}
Notably, the second term in \eqref{eq:sce2} is not necessarily zero.
Thus, the Kalman filter does generally not preserve the expected linear invariants of the prior $\pdf{\X}$. 
To be more precise, \eqref{eq:sce2} implies that the Kalman filter preserves the expected linear invariants of the prior $\pdf{\X}$ if and only if $\Uperp^\top \BB{b} = \zero{r}$ with ``expected state update'' $\BB{b} = \cov{\X, \Y} \cov{\Y}^{-1} (\mean{\Y} - \ystar) \in \real{n}$.
We can interpret this as $\BB{b}$ lying in the span of the columns of $\Upara$ and, therefore, not containing any information about the linear invariants.

\section{Computational examples}
\label{sec:examples}

We provide a series of numerical examples on the preservation of linear invariants with ensemble Kalman filters and SMFs. 
We consider (i) a synthetic linear ordinary differential equation (ODE) with an arbitrary number of linear invariants, (ii) the linear advection equation as a simple prototype of a hyperbolic conservation law, and (iii) an embedding of the nonlinear \lo in $\real{4}$ with a linear invariant. 
For reproducibility, the code of our numerical experiments is available at \href{https://github.com/mleprovost/Paper-Linear-Invariants-Ensemble-Filters}{\textcolor{magenta}{https://github.com/mleprovost/Paper-Linear-Invariants-Ensemble-Filters}}.

\subsection{Numerical setup}
\label{subsec:numerical_setup}

This section discusses common considerations in our data assimilation experiments. To isolate the performance of the filtering algorithms, we perform a ``twin experiment'' in which the same state-space model is used to generate the ground truth and in the forecast and analysis steps of the ensemble filters \citep{asch2016data}. 
The ground truth is generated by sampling an initial state $\x_0^\star$ from the initial distribution $\pdf{\X_0}$ and evolving it through the dynamical model \eqref{eqn:dyn} over the time interval $[0, t_f]$. 
Discretizing $[0, t_f]$ using an equidistant grid with step size $\dtobs$, at each time step, we collect the true state $\x^\star_t$ and generate a noisy observation $\ystar_t$ from the observation model \eqref{eqn:obs}. 
Although we might not know the transition kernel $\pdf{\X_{t} \given \X_{t-1}}$ and the likelihood model $\pdf{\Y_{t} \given \X_{t-1}}$, we assume that we can generate samples from them. 
Furthermore, we assume that the observation operator $\obs$ is linear and given by $\obs(\x_t) = \Obs \x_t$ with observation matrix $\Obs \in \real{d \times n}$. 
In the filtering setting that underpins this work, we seek to sequentially estimate the posterior of the states $\{\x^\star_t \}$ given noisy observations $\{ \ystar_t \}$ of the true process and the initial distribution $\pdf{\X_0}$. 

To ensure that the process noise $\Noisedyn_t$ in the dynamical model \eqref{eqn:dyn} does not change the linear invariants $\invar(\x) = \Uperp^\top \x$, we consider linear invariant-preserving Gaussian process noise with zero mean and covariance $\sigma_{\Noisedyn_t}^2 \id{n}$. 
To sample from this distribution, we project samples from $\N{\zero{n}}{\sigma_{\Noisedyn_t}^2 \id{n}}$ on the columns of $\Upara$ spanning the (orthogonal) complement of $\Span\{\Uperp\}$. 
We stress that the numerical schemes used in our computational experiments preserve the dynamical systems' linear invariants. 
Thus, violations of these invariants can be entirely attributed to flaws in the filtering algorithms.

We assess the performance of the different ensemble filters using the root-mean-square error (RMSE) and the spread of the posterior ensemble. 
Recall that the RMSE at time $t$ is defined as $\rmse_t^2 = ||\x^\star_t - \widehat{\BB{\mu}}_{t, a}||_2^2/n$, where $\widehat{\BB{\mu}}_{t, a} \in \real{n}$ is the filtering ensemble mean at time $t$, and $n$ is the dimension of the state variable. 
The spread at time $t$ is given by $\spread_t^2 = \operatorname{tr}(\scov{t,a})/n$, where $\scov{t,a} \in \real{n \times n}$ is the filtering ensemble covariance at time $t$. 
In all experiments, the filters are run over $2 \cdot 10^3$ assimilation cycles, and we discard the first $10^3$ cycles to ensure the statistics are approximately stationary. 
After this spin-up phase, we report the time-averaged metrics of the ensemble filters with optimally tuned multiplicative inflation and/or covariance tapering, achieving the lowest RMSE for a given ensemble size.

\subsection{A synthetic linear model with an arbitrary number of linear invariants}
\label{subsec:synthetic_linear}

We start by considering a linear Gaussian filtering problem, for which we compare two related ensemble filters: the stochastic ensemble Kalman filter \citep{evensen1994sequential} without constraints on the linear invariants, called \uncons \enkf (\unenkf), and the proposed stochastic ensemble Kalman filter that preserves linear invariants by building a Monte Carlo approximation of the analysis map \eqref{eqn:constrained_gausssian_analysis_map}, called \cons \enkf (\consenkf). 
For both filters, we apply multiplicative inflation and covariance tapering to regularize the empirical estimate of the Kalman gain, see \citep{asch2016data} for more details. 
We stress that these techniques are critical for successful state estimation with ensemble Kalman filters using limited samples, but they are also known to break linear invariants \citep{janjic2014conservation}. 
The proposed \cons \enkf combines the previous regularization techniques with a projection on the span of $\Upara$ to preserve linear invariants. 
Algorithm \ref{algo:linear_constrained} in \ref{apx:linear_constrained} provides a pseudo-code for the \cons \enkf.

Consider a synthetic linear ODE model for which we can fix an arbitrary number of linear invariants. 
This allows us to investigate the performance of the proposed \consenkf, compared to the existing \unenkf, for different numbers of invariants $r$ and ensemble sizes $M$. 
To this end, consider the forward model
\begin{equation}
\label{eqn:linear_model}
        \frac{\d \x}{\dt} = \A_r \x, \quad
        \x(0) = \x_0,
\end{equation}
where $\x_0 \in \real{n}$ and the matrix $\A_r \in \real{n \times n}$ is symmetric negative semidefinite. 
Its spectral decomposition is
\begin{equation}
\label{eqn:spectral}
    \A_r = \U \diag{\eigen_r} \U^{T},
\end{equation}
where $\U \in \real{n \times n}$ is an orthonormal matrix and $\diag{\eigen_r} = \operatorname{diag}(\eigen_r)$ contains the eigenavlues $\eigen_r \in \real{n}$ of $\A_r$. 
The dynamic model \eqref{eqn:linear_model} is motivated by production-destruction systems in atmospheric chemistry for which preserving linear invariants is crucial for accurate and robust numerical simulations \citep{huang2022stability,izgin2023stability,izgin2023study}.

Without loss of generality, we assume that the eigenvalues in $\eigen_r$ are ordered decreasingly as $\eigen_r = [\zero{r}, -\lambda_{r+1}, \ldots, -\lambda_{n}]$ with $\lambda_{k} > 0$ for $k>r$. 
We use the subscript $r$ to stress that $0$ is an eigenvalue with multiplicity $r$. We choose nonnegative eigenvalues for $\A_r$ to produce a linear dynamical model \eqref{eqn:linear_model} with stable solutions \citep{huang2022stability}. 
The solution of \eqref{eqn:linear_model} is 
\begin{equation}
    \x(t) 
    		= \exp(\A_r t) \x_0
		= \U \exp(\diag{\eigen_r} t) \U^\top \x_0.
\end{equation}
We denote by $\Uperp \in \real{n \times r}$ the first $r$ orthonormal columns of $\U$ and by $\Upara$ the remaining $(n-r)$ orthonormal columns. Thus, we can verify that the linear invariants $\Uperp^\top \x_0  = \C_0 \in \real{r}$ are preserved by \eqref{eqn:linear_model}:
\begin{equation}
\begin{aligned}
	\Uperp^\top \x(t) & = \Uperp^\top \U \exp(\diag{\eigen_r} t) \U^\top \x_0,\\
		& = \Uperp^\top \Uperp \diag{\one_{r}} \Uperp^\top \x_0 + \Uperp^\top \Upara \diag{[\exp(-\lambda_{r+1} t), \ldots, \exp(-\lambda_n t)]} \Upara^\top \x_0,\\
		& = \id{r}\C_0 + \zero{r},\\
		& = \C_0,
\end{aligned}
\end{equation}
where we have used that $\Uperp^\top \Uperp = \id{r}$, $\Uperp^\top \Upara = \zero{r \times (n-r)}$, and $\exp(\diag{\eigen_r} t) = \diag{[\one_{r}, \exp(-\lambda_{r+1} t), \ldots, \exp(-\lambda_n t)]}$. 
We denote by $\one_{r} \in \real{r}$ the vector of ones of length $r$.

From \eqref{eqn:linear_model}, we construct a discrete time forward operator $\dyn$ for the dynamical model \eqref{eqn:dyn} by integration of \eqref{eqn:linear_model} over a time step $\dtobs$ between two assimilation cycles, i.e., $\dyn(\x_t) = \exp(\A_r \dtobs) \x_t$.  
Furthermore, we use a Gaussian linear invariant-preserving process noise with zero mean and covariance $\sigma_{\Noisedyn_t}^2 \id{n}$ with $\sigma_{\Noisedyn_t} = 10^{-2}$. 
That is, the dynamic model \eqref{eqn:dyn} that propagates the state forward in time is 
\begin{equation}
    \X_{t+1} = \exp(\A_r \dtobs) \X_t + \Noisedyn_t, \; \text{for }t \geq 0
\end{equation} 
with $\Noisedyn_t \sim \mathcal{N}(\mathbf{0},\sigma_{\Noisedyn_t}^2 \id{n})$. 
As mentioned above, this setting allows a parametric study of the performance metrics as a function of the ratio of linear invariants $r/n \in [0, 1]$ (with state dimension $n$) and the ensemble size $M$. 
We consider a state of dimension $n = 20$. 
The eigenvalues $-\lambda_k$ for $k>r$ are independently drawn from a uniform distribution on $[-5,0]$. 
The time step size between two assimilation cycles is $\dtobs = 10^{-1}$.
We observe every component of the state, i.e., $d = n = 20$, corrupted by an additive Gaussian observation noise with zero mean and covariance $\sigma_{\Noiseobs}^2 \id{d}$ with $\sigma_{\Noiseobs} = 10^{-1}$. 

\begin{figure}[tb] 
    \centering
    \begin{subfigure}[b]{0.475\textwidth}
		\includegraphics[width=\textwidth]{%
			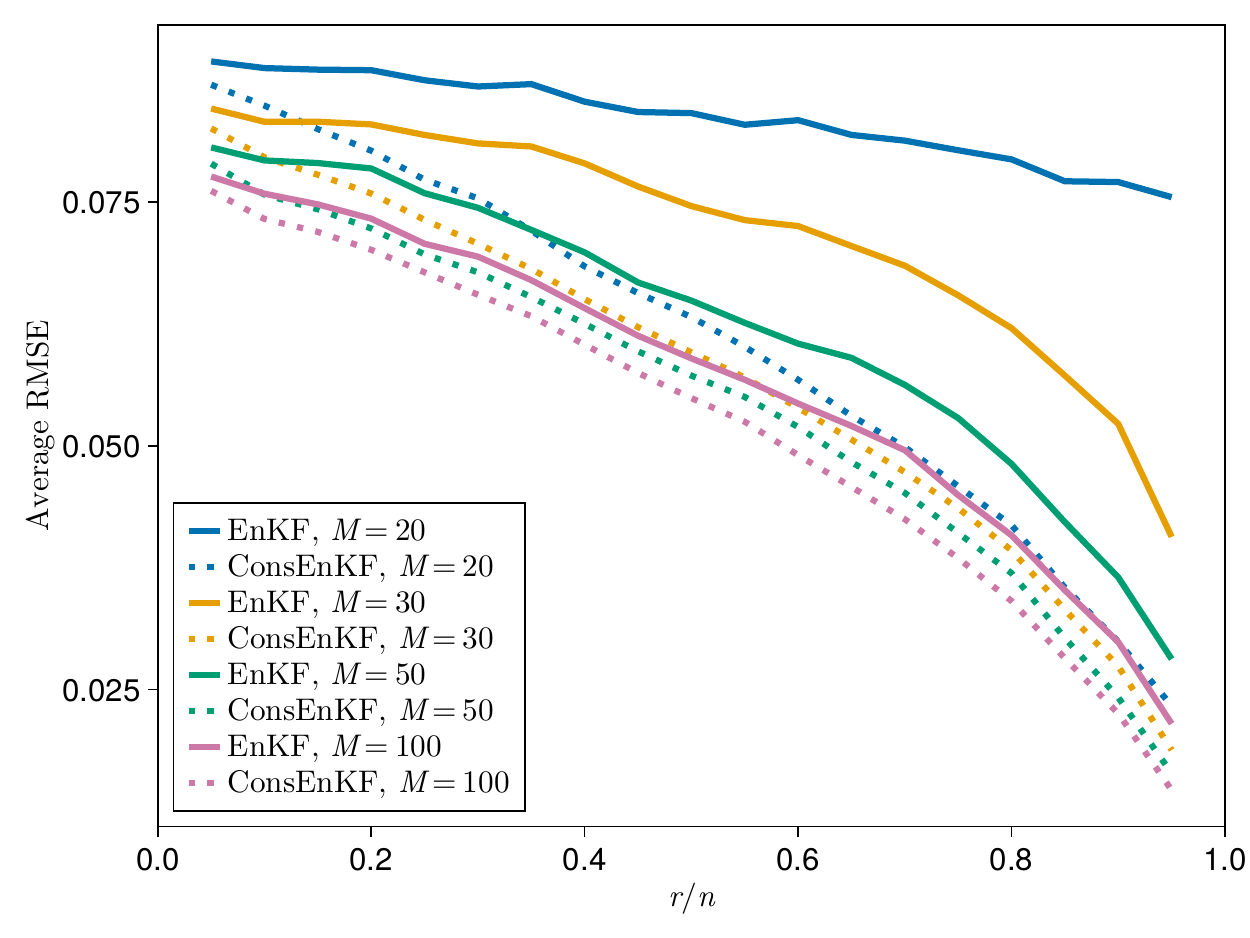} 
        \caption{RMSE for varying ratios $r/n$}
        \label{fig:toy_problem_RMSE_ratio}
    \end{subfigure}
    \begin{subfigure}[b]{0.475\textwidth}
		\includegraphics[width=\textwidth]{%
			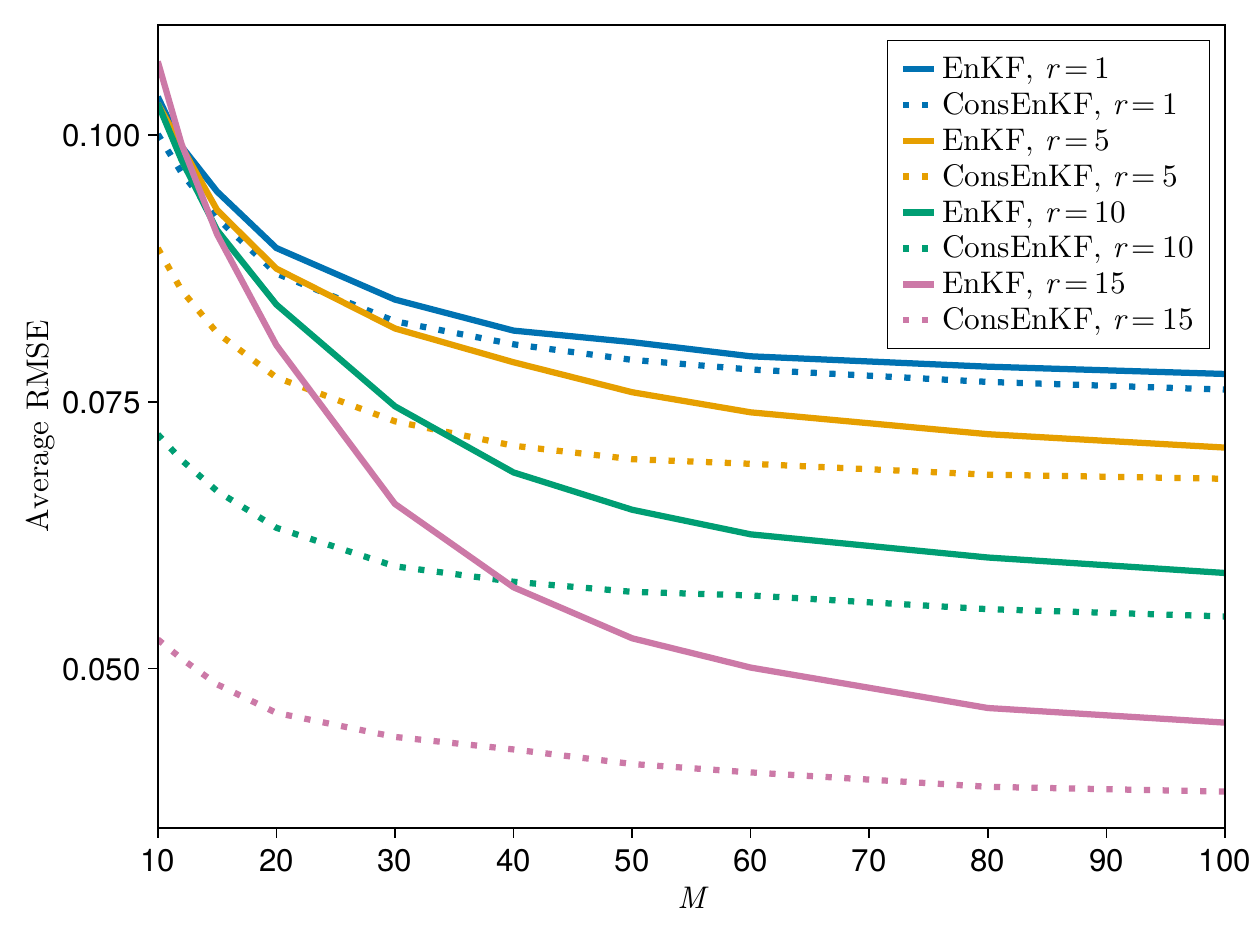} 
        \caption{RMSE for varying $M$}
        \label{fig:toy_problem_RMSE_M}
    \end{subfigure}
    \\ 
    \begin{subfigure}[b]{0.475\textwidth}
		\includegraphics[width=\textwidth]{%
			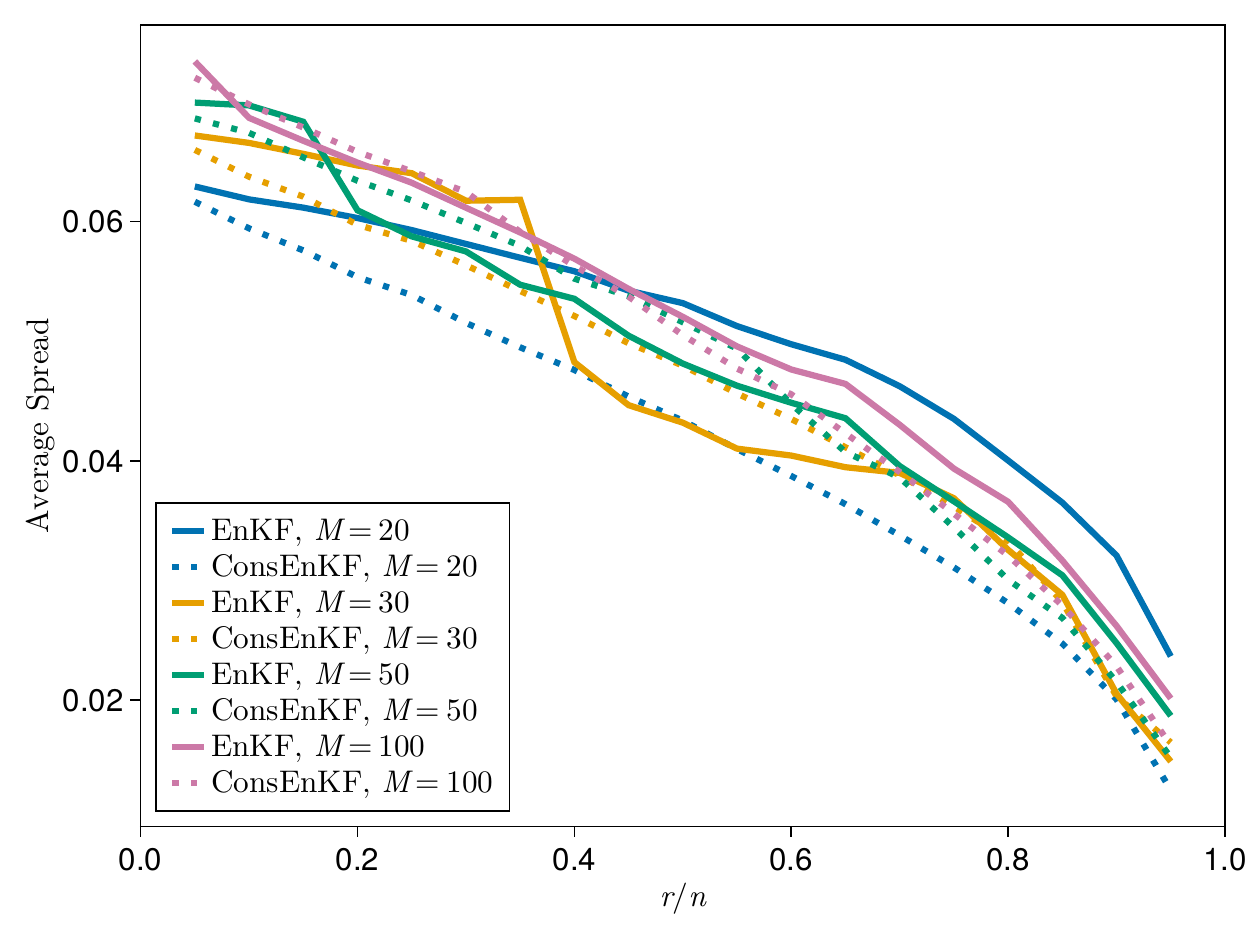} 
        \caption{Spread for varying ratios $r/n$}
        \label{fig:toy_problem_spread_ratio}
    \end{subfigure}
    \begin{subfigure}[b]{0.475\textwidth}
		\includegraphics[width=\textwidth]{%
			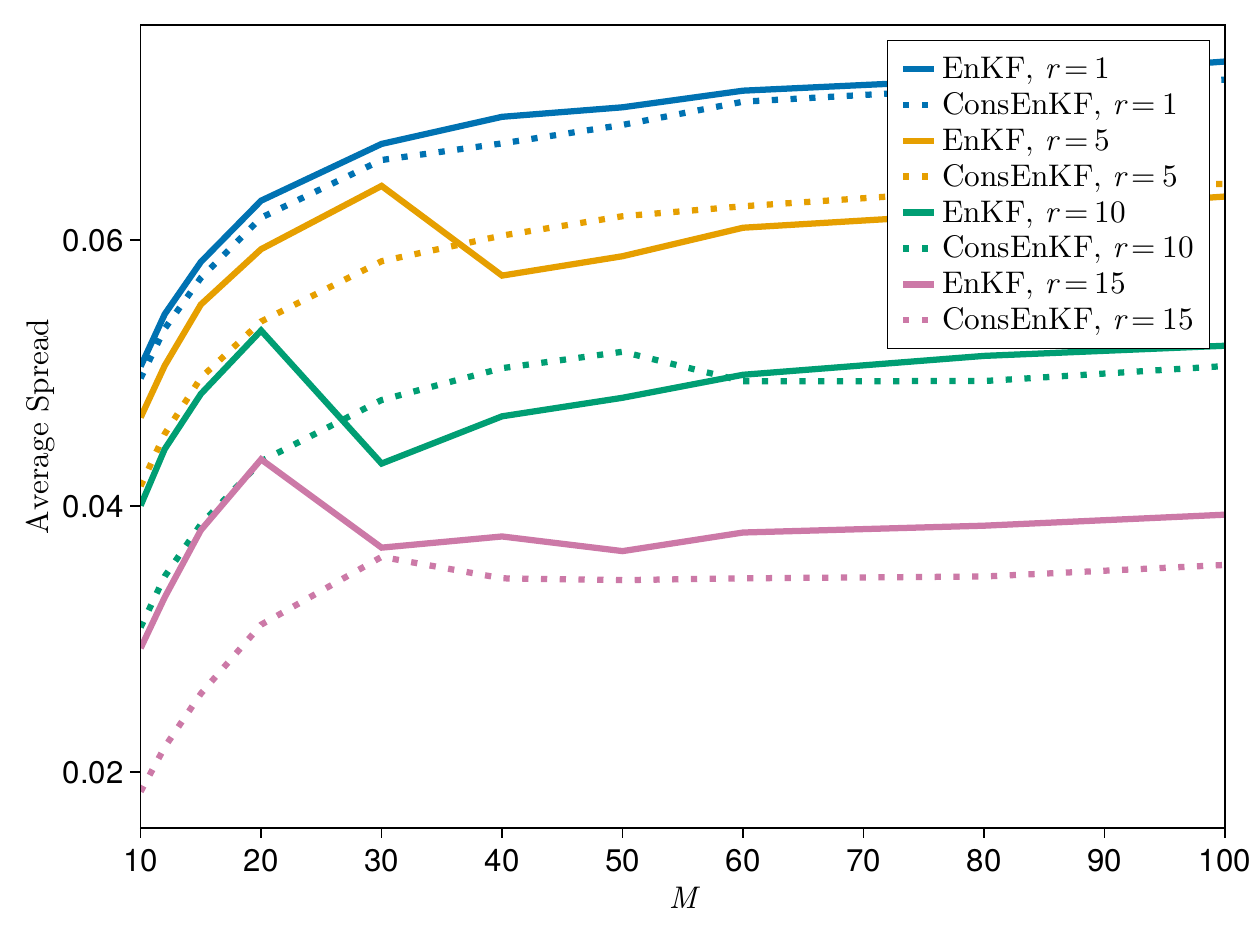} 
        \caption{Spread for varying $M$}
        \label{fig:toy_problem_spread_M}
    \end{subfigure}
    \caption{ 
    		Left: Time-averaged RMSE (top) and spread (bottom) for the \uncons \enkf (solid lines) and the \cons \enkf (dashed lines) for varying ratios $r/n \in [0,1]$ (left) between the number of linear invariants $r$ and the problem dimension $n$ with $M = 20, 30, 50, 100$ samples. 
    Right: Time-averaged RMSE (top) and spread (bottom) for the \uncons \enkf (solid lines) and the \cons \enkf (dashed lines) for varying ensemble sizes $M$ with $r = 1, 5, 10, 15$. 
    Both filters use optimally tuned multiplicative inflation and covariance tapering.
	}
    \label{fig:toy_problem}
\end{figure}

\cref{fig:toy_problem_RMSE_ratio,fig:toy_problem_RMSE_M} compare the RMSE of the \uncons \enkf and the \cons \enkf for different ensemble sizes $M$ and ratios $r/n$ between the number of linear invariants $r$ and the problem dimension $n$. 
Preserving the linear invariants consistently reduces the RMSE. 
For $r/n < 0.1$, the improvements are at around $5\%$, showing slight variation with the ensemble size $M$. 
As $r/n$ increases, we observe larger RMSE improvement for all ensemble sizes $M$. 
Specifically, for $M = 20$ and $r = 19$, the RMSE of the \uncons \enkf is $7.7 \cdot 10^{-2}$, while the RMSE of the \cons \enkf is $2.5 \cdot 10^{-2}$, corresponding to a reduction of the RMSE by $67\%$. 
Moreover, for small ensemble sizes $M < 40$, the RMSE improvements are significant and increase with the number of linear invariants. 
For $M = 10$ and $r = 10$, the RMSE is reduced by $36\%$. For a fixed number of invariants $r$, the RMSE improvement due to preserving linear invariants decreases with the ensemble size $M$. 
This is consistent with the empirical Kalman gain estimate requiring less regularization as the ensemble size $M$ increases. 
Thus, the analysis map of the \uncons \enkf gets closer to the analysis map of the vanilla \enkf. 
From Section \ref{subsec:constant_scenario}, the vanilla \enkf preserves linear invariants if all the forecast samples have the same invariants. 
We expect violations of the linear invariants of the \unenkf to decrease in magnitude as the ensemble size $M$ increases. 

Furthermore, \cref{fig:toy_problem_spread_ratio,fig:toy_problem_spread_M} report on the spread of the \uncons \enkf and the \cons \enkf. 
We do not observe significant differences in the spread of the \uncons \enkf and the \cons \enkf.
We conclude that preserving linear invariants is particularly advantageous for the RMSE when the ensemble size $M$ is small and the ratio of linear invariants $r/n$ is large. 
We interpret these results as a reduction of the variance of the empirical Kalman gain by constraining the image space of the estimated Kalman gain to the subspace spanned by the columns of $\Upara$. 
We refer readers to \citep{leprovost2022lowenkf} for further discussions on the benefits of linear dimension reductions for the ensemble Kalman filter.

\subsection{Linear advection on periodic domain}
\label{subsec:linad}

We next consider the one-dimensional \linad equation on the periodic domain $\domain = [0,1)$ as a forward model: 
\begin{equation}\label{eqn:linad}
\begin{aligned}
        \ddt{u(s, t)} + \div \left(c u(s, t) \right) & = 0, \quad && s \in \domain, \ t>t_0, \\
        u(s, t_0) & = u_0(s), \quad && s \in \Omega, 
\end{aligned}
\end{equation}
where $u(s, t)$ is the solution, $u_0 \colon \real{} \to \real{} $ is the initial state at time $t_0$, and $c = 1$ is the advection velocity. 
Notably, the total mass $m(t)  = \int_{\Omega} u(s,t) \d s$ of exact solutions $u(s,t)$ of \eqref{eqn:linad} is constant in time, i.e., $m(t) = m(t_0)$ for all $t$. 
Numerical solutions to \cref{eqn:linad} should mimic this behavior on a discrete level. 

We discretize  the domain $\domain = [0,1)$ with $n = 128$ grid nodes $\{\svec_k \}$. 
The state vector $\x_t$ at time $t$ is given by the pointwise evaluations of $u(s, t)$ at the grid nodes $\{\svec_j \}$, i.e., $\x_{t,k} = u(s_k, t)$ for $k = 1, \ldots, n$. 
To numerically solve \eqref{eqn:linad}, we use a spectral method for the spatial discretization and a fourth-order adaptive strong stability preserving Runge--Kutta (SSPRK43) time integration scheme \citep{brunton2019data, rackauckas2017differentialequations}. 
That is, the dynamic model \eqref{eqn:dyn} that propagates the state forward in time is 
\begin{equation}
    \X_{t+1} = \dyn(\X_t) + \Noisedyn_t, \; \text{for }t \geq 0,
\end{equation} 
where $\dyn(\X_t)$ corresponds to the numerical solution of \cref{eqn:linad} with initial data $\X_t$, computed as outlined above. 
We consider a Gaussian linear invariant-preserving process noise $\Noisedyn_t$ with zero mean and covariance $\sigma_{\Noisedyn_t}^2 \id{n}$ with $\sigma_{\Noisedyn_t} = 10^{-2}$. 
The time step size between two assimilation cycles is $\dtobs = 2 \cdot 10^{-1}$ (corresponding to half the convective time $t_c = L/c$, where $L$ is the length of $\domain$).

The distribution $\pdf{\X_0}$ used to generate the true initial condition and the initial ensemble members is given by an $n$-dimensional smooth and periodic distribution, denoted $\mathcal{S}_{n,r}(\Uperp, \C, \alpha)$. 
It is parameterized by a sub-unitary matrix $\Uperp \in \real{n \times r}$, a vector $\C \in \real{r}$, and a smoothing parameter $\alpha > 0$. 
By design, a random variable $\X \sim \mathcal{S}_n(\Uperp, \C, \alpha)$ satisfies $\Uperp^\top \X = \C$. 
To generate samples from the distribution $\mathcal{S}_n(\Uperp, \C, \alpha)$, we proceed in three steps: 
\begin{enumerate}
    \item[(i)]
    Generate samples $\z_{\operatorname{Re}}, \z_{\operatorname{Im}}$ from the $(n/2 + 1)$-dimensional standard Gaussian distribution $\N{\zero{}}{\id{}}$

    \item[(ii)] 
    Compute $\tilde{\x} \in \complex{\frac{n}{2} + 1}$ with components $\tilde{\x}_k = \left( \z_{\operatorname{Re}, k} + i \;  \z_{\operatorname{Im}, k} \right) \exp\left( -\frac{1}{2} k^\alpha \right)$, $k = 1, \ldots, \frac{n}{2} + 1$  

    \item[(iii)] 
    Compute $\x = \Uperp \C  + (\id{n} - \Uperp \Uperp^\top) \Finv (\tilde{\x})$
    
\end{enumerate}

Here, $i \in \complex{}$ is the imaginary unit and $\Finv \colon \complex{\frac{n}{2}+1} \to \real{n}$ denotes the inverse real fast Fourier transform, implemented under \code{rfft} in the FFTW library \citep{frigo1998fftw}. 
Figure \ref{fig:initial_ensemble} shows six samples $\x$ from this distribution. 

\begin{figure}[tb]
    \centering
    \includegraphics[width = 0.6\linewidth]{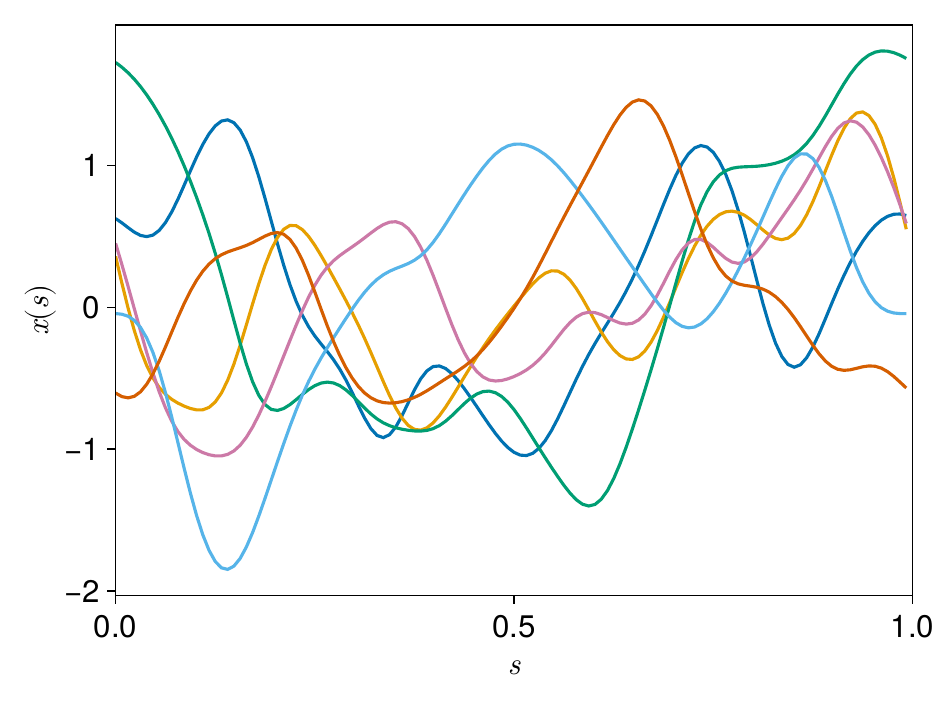}
    \caption{Six samples drawn from the distribution $\mathcal{S}_{n, r}(\Uperp, \C, \alpha)$ with $n = 128,\; r = 1, \; \Uperp = \one{}/n \in \real{n}, \C = 1 \in \real{},\; \text{and } \alpha = 1$. The linear constraint for each sample is $\Uperp^\top \x = 1$.}
    \label{fig:initial_ensemble}
\end{figure}

We have observations at every fourth grid point ($d = 32$) corrupted by an additive zero-mean Gaussian noise with covariance $\sigma_{\Noiseobs}^2 \id{d}$ with $\sigma_{\Noiseobs} = 10^{-1}$. 
The true mass $\C$ is drawn from a Gaussian distribution with mean $1$ and standard deviation $5 \cdot 10^{-2}$. 

\begin{figure}[tb]
    \centering
    \includegraphics[width = 0.475\linewidth]{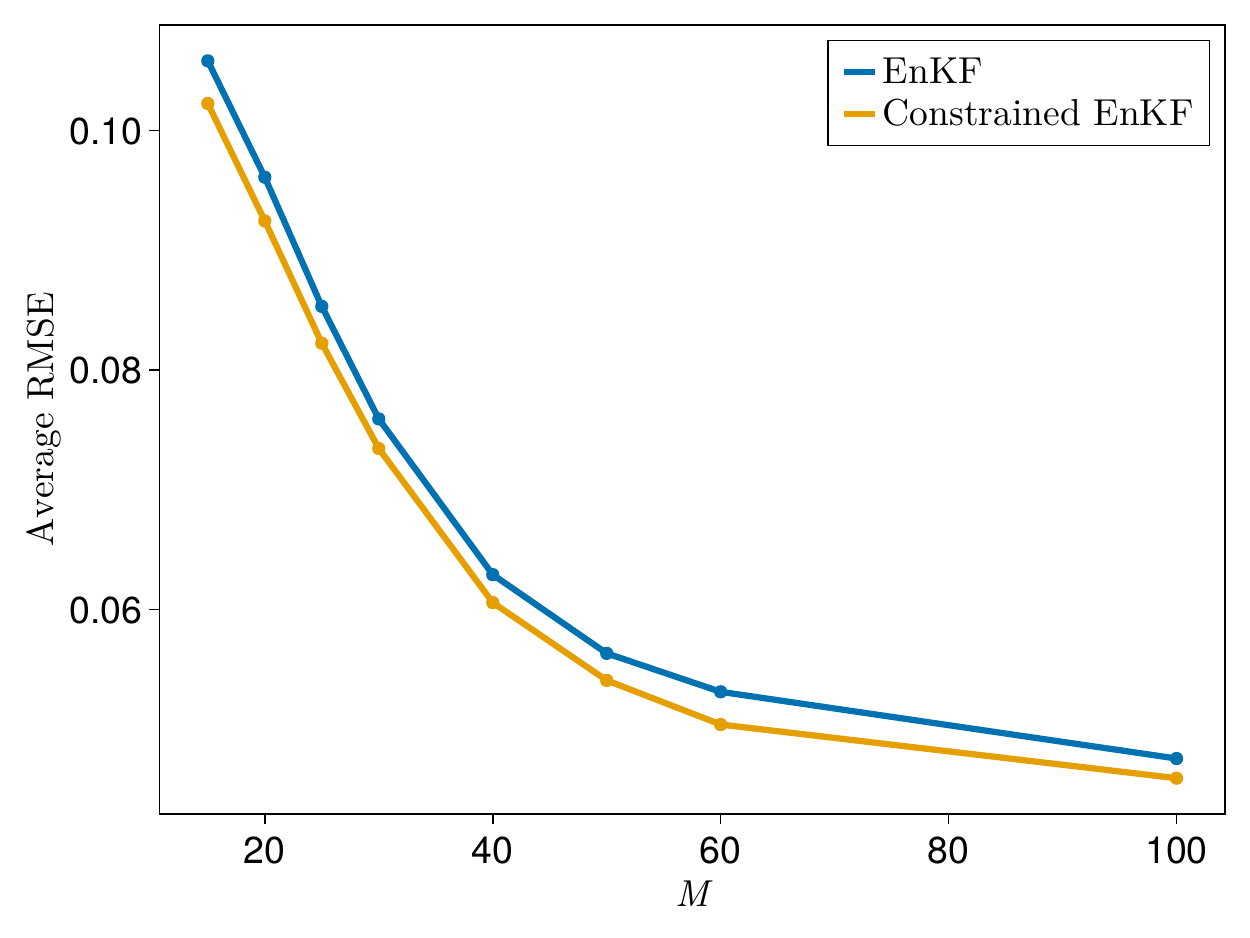}
    ~
    \includegraphics[width = 0.475\linewidth]{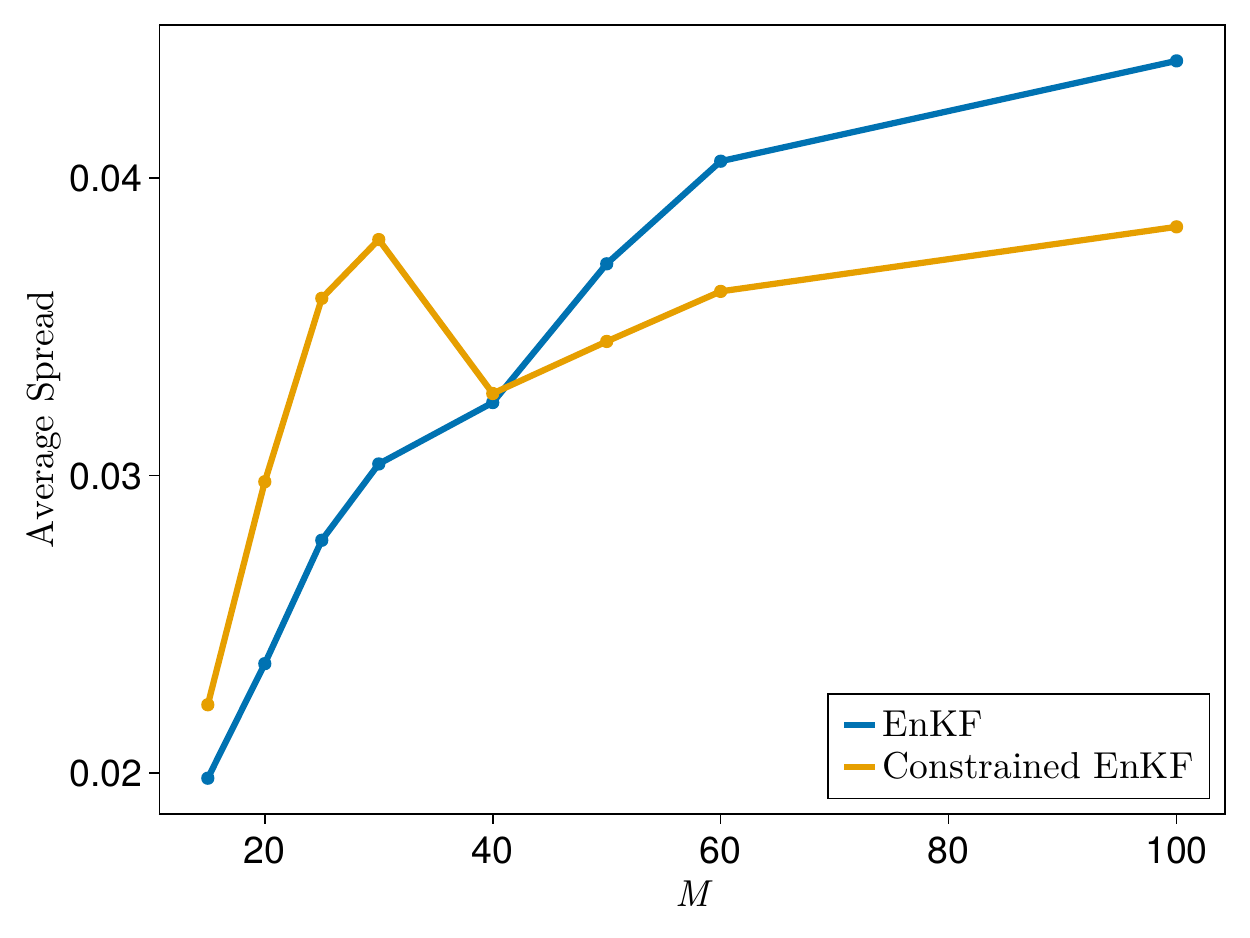}
    \caption{
        Left: Time-averaged RMSE of the \linad problem using the \unenkf (blue) and the \consenkf (yellow) for varying ensemble sizes $M$. 
        Right: Median spread for varying ensemble sizes $M$.
        Both filters use optimally tuned inflation and covariance tapering. 
    }
    \label{fig:linadvection_metrics}
\end{figure}

Figure \ref{fig:linadvection_metrics} reports the time-averaged evolution of the RMSE and the spread of the \linad problem using the \unenkf (blue) and the \consenkf (yellow) for varying ensemble sizes $M$. 
Both filters use optimally tuned inflation and covariance tapering.
We note that the \consenkf has a slightly better RMSE than its \uncons version. 
This result is consistent with the results of Section \ref{subsec:synthetic_linear} for small ratios $r/n$.
The spread over the ensemble size is similar. 
While these global metrics can be somewhat deceptive regarding improvement, the true benefit of preserving the mass is revealed by examining the evolution of the mass estimate, shown in Figure \ref{fig:linadvection_mass}.

\begin{figure}[tb]
    \centering
    \includegraphics[width = 0.6\linewidth]{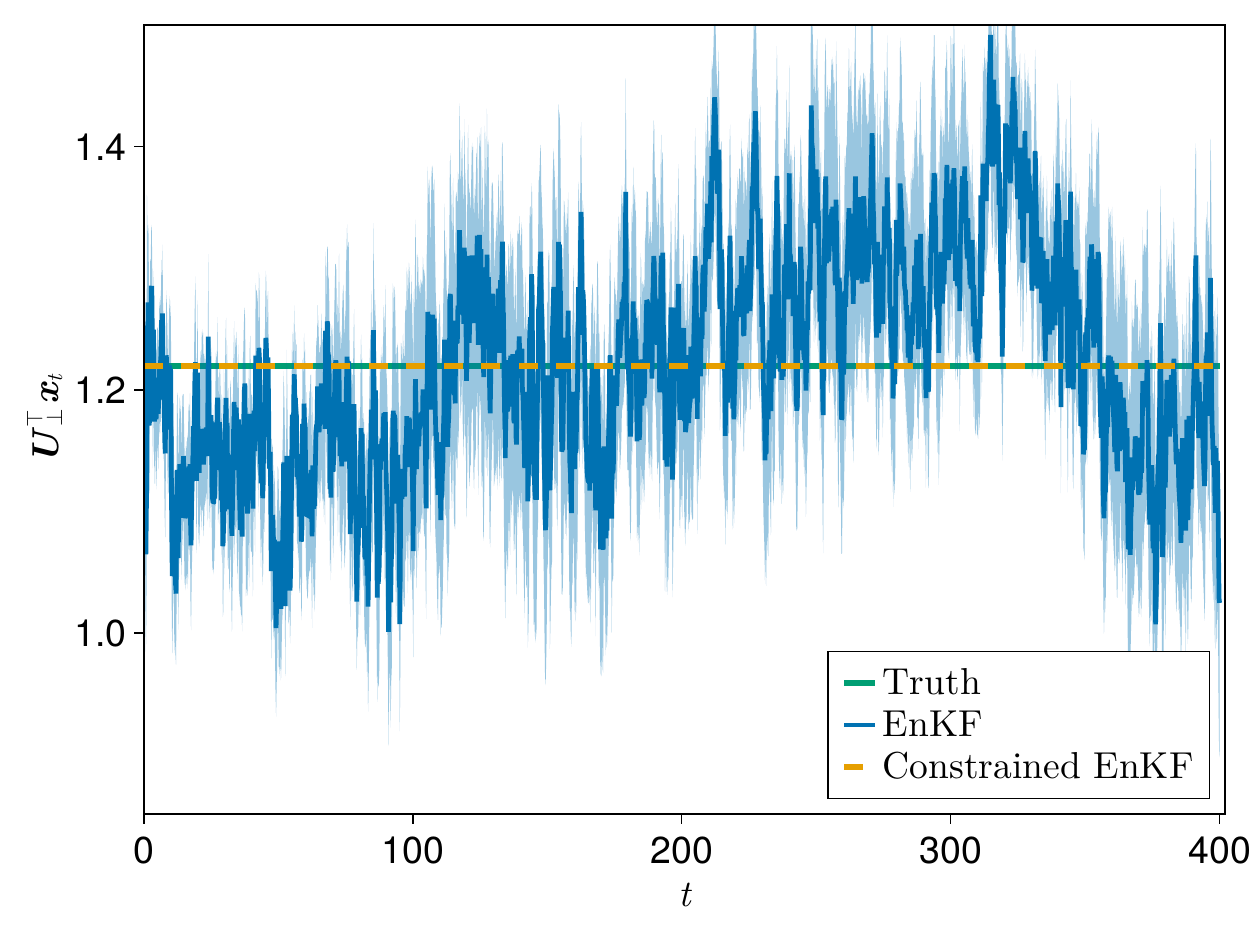}
    \vskip -0.5cm
    \caption{
        Time evolution of the linear invariant $\Uperp^\top \x_t$ for the true state process (green) and the posterior mean obtained with the \unenkf (blue) and the \consenkf (dashed yellow) for an ensemble size of $M = 40$. 
        Notably, $\Uperp^\top \x_t$ for the \consenkf and true state process align with each other. 
        Fainted areas show the $10\%$ and $90\%$ quantiles of the posterior estimate of the invariant. 
        Both filters use optimally tuned inflation and covariance tapering. 
    }
    \label{fig:linadvection_mass}
\end{figure}

We observe from Figure \ref{fig:linadvection_mass} that the \consenkf conserves mass up to machine precision. At the same time, the optimally tuned \unenkf shows notable unphysical variations in the mass estimate (up to $20 \%$) for $M = 40$ and does not necessarily converge to the true invariant as we assimilate more observations.

\subsection{A \elo with a linear invariant} 
\label{subsec:embedded_lorenz63}

In the previous numerical experiments, the filtering distribution was Gaussian. 
However, we next consider a non-Gaussian filtering problem with a linear invariant, and compare the \uncons/\cons SMFs with the \enkf. 
Recall that \Cref{algo:nonlinear_constrained} provides a pseudo-code for the \cons \smf. 
We again refer readers to \citep{spantini2022coupling} for further details on SMFs. 

Consider the three-dimensional \lo for atmospheric convection \citep{lorenz1963deterministic}: 
\begin{equation}\label{eqn:lorenz63}
\frac{\d \widetilde{\x}}{\dt} = \widetilde{\mathfrak{F}}(\widetilde{\x}, t) = 
\begin{bmatrix}
    \sigma(\widetilde{x}_2 - \widetilde{x}_1)\\
    \widetilde{x}_1(\rho- \widetilde{x}_3) - \widetilde{x}_2\\
    \widetilde{x}_1 \widetilde{x}_2-\beta \widetilde{x}_3
\end{bmatrix},
\end{equation}
where $\widetilde{\x} = [\widetilde{x}_1, \widetilde{x}_2, \widetilde{x}_3]^\top \in \real{3}$ is the state, $\widetilde{\mathfrak{F}} \colon \real{3} \times \real{} \to \real{3}$ denotes the forward operator of \eqref{eqn:lorenz63}, and $\sigma, \beta, \rho$ are fixed parameters. 
In our simulation, we use $\sigma = 10, \beta = 8/3, \rho = 28$. 
For these parameter values, the system exhibits chaotic behavior and is governed by a strange attractor \citep{lorenz1963deterministic}. 
We introduce an embedded version of the \lo with a linear invariant, which we call the \elo. 
For this low-dimensional problem, covariance tampering or localization does not help regularize the unconstrained \enkf. 
Assuming that the linear invariant is constant over the forecast distribution $\pdf{\X_{t} \given \Y_{1:t-1}}$, Section \ref{sec:invariants_kalman} proves that in this case, the unconstrained \enkf preserves linear invariants. 
This example allows us to investigate the influence of a nonlinear analysis map and the preservation of linear invariants. 
We compare the constrained stochastic map filter (\consmf) with the unconstrained ensemble Kalman filter (\unenkf) and stochastic map filter (\unsmf). 

To construct the \elo, we augment the state $\widetilde{\x} \in \real{3}$ of the original \lo of \eqref{eqn:lorenz63} by adding a fourth component $\widetilde{x}_4$ with zero dynamics, \ie $\d \widetilde{x}_4/\dt = 0$, and performing a random rotation of the embedded coordinates. 
This random rotation allows us to create a dynamical system with a non-local linear invariant, \ie the invariant depends on all the state variables. 
We denote the augmented state by $\widetilde{\x}_{\text{aug}} \in \real{4}$ and the augmented forward operator by $\widetilde{\mathfrak{F}}_{\text{aug}} \colon \real{4} \times \real{} \to \real{4}, (\widetilde{\x}_{\text{aug}}, t) \mapsto [\widetilde{\mathfrak{F}}(\widetilde{\x}, t), 0]^\top$. Observe that $\widetilde{\x} \mapsto \widetilde{\U}_\perp^\top \widetilde{\x}$ is a linear invariant for the augmented state $\widetilde{\x}$ with $\widetilde{\U}_\perp = [0, 0, 0, 1]^\top$. 
We then apply a random orthogonal matrix $\Q \in \real{4 \times 4}$ to $\widetilde{\x}_{\text{aug}}$ to define the new state $\x = \Q \widetilde{\x}_{\text{aug}}$. 
Thus, the state $\x$ is governed by the ODE system
\begin{equation}
\label{eqn:embedded_lorenz63}
    \frac{\d \x}{\dt} = \frac{\d \Q \widetilde{\x}_{\text{aug}}}{\dt} = \Q \widetilde{\mathfrak{F}}_{\text{aug}}(\Q^{-1} \x, t) = \Q \widetilde{\mathfrak{F}}_{\text{aug}}(\Q^\top \x, t),
\end{equation}
where the last equality is due to the orthonormality of $\Q$. One can verify that the linear invariant $\x \mapsto\Uperp^\top \x$ with $\Uperp = \Q \widetilde{\U}_\perp \in \real{4}$ is preserved by \eqref{eqn:embedded_lorenz63}:
\begin{equation}
    \frac{\d \Uperp^\top  \x}{\dt} 
    		=  \widetilde{\U}_\perp^\top \Q^\top  \frac{\d \x}{\dt}
    		= \widetilde{\U}_\perp^\top \widetilde{\mathfrak{F}}_{\text{aug}}(\Q^\top \x, t) 
    		= \langle [0, 0, 0, 1] , [\widetilde{\mathfrak{F}}(\Q^\top \x , t), 0] \rangle
    		= 0,
\end{equation}
where $\langle \cdot, \cdot \rangle$ denotes the Euclidian scalar product in $\real{n}$. 
The random orthogonal rotation $\Q \in \real{4 \times 4}$ is constructed by extracting the $Q$ factor of the QR factorization of a $4 \times 4 $ matrix whose entries are drawn from the standard Gaussian distribution. To solve \eqref{eqn:embedded_lorenz63}, we use a fourth-order RK time integration scheme \citep{rackauckas2017differentialequations}. 
The time step size between two assimilation cycles is $\dtobs = 5 \cdot 10^{-2}$. 
We observe every component of the state, i.e., $d = n = 4$, corrupted by additive Gaussian observation noise with zero mean and covariance $\sigma_{\Noiseobs}^2 \id{d}$ with $\sigma_{\Noiseobs} = 10^{-2}$. 
The initial distribution $\pdf{\X_0}$ is standard Gaussian. 
The true invariant $\C$ is set to $1$. 

If the observations are conditionally independent, \ie the likelihood factorizes as $\pdf{\Y \given \X} = \prod_{j=1}^d \pdf{\mathsf{Y}_j \given \X}$, it is equivalent to assimilate a $d$-dimensional observation $\ystar \in \real{d}$ at once, or to assimilate the observation components $y^\star_j \in \real{}, \; j = 1,\ldots, d,$ in $d$ recursive updates. 
We provide a justification in \ref{apx:recursive_assim}. 
Thus, the analysis step can be equivalently performed by computing $d$ analysis maps with $n+1$ inputs---each map associated with one observation, or computing a single map with $d + n$ inputs.  In practice, the number of samples to estimate the analysis map is small compared to the dimensions of the states and observations. Thus, to reduce the variance of the resulting analysis map $\tmap$, it is favorable to estimate $d$ transport maps $\smap^{\Xup}$ of $n+1$ inputs rather than a single map $\smap^{\Xup}$ of $d+n$ inputs. See \citep{spantini2022coupling, houtekamer2001sequential} for further discussions. 
Following \citep{spantini2022coupling}, we apply the \uncons/\cons stochastic map filters sequentially. 
To allow for a fair comparison, the \uncons \enkf is also applied sequentially in this example. 

\begin{figure}[tb]
    \centering
    \includegraphics[width = 0.475\linewidth]{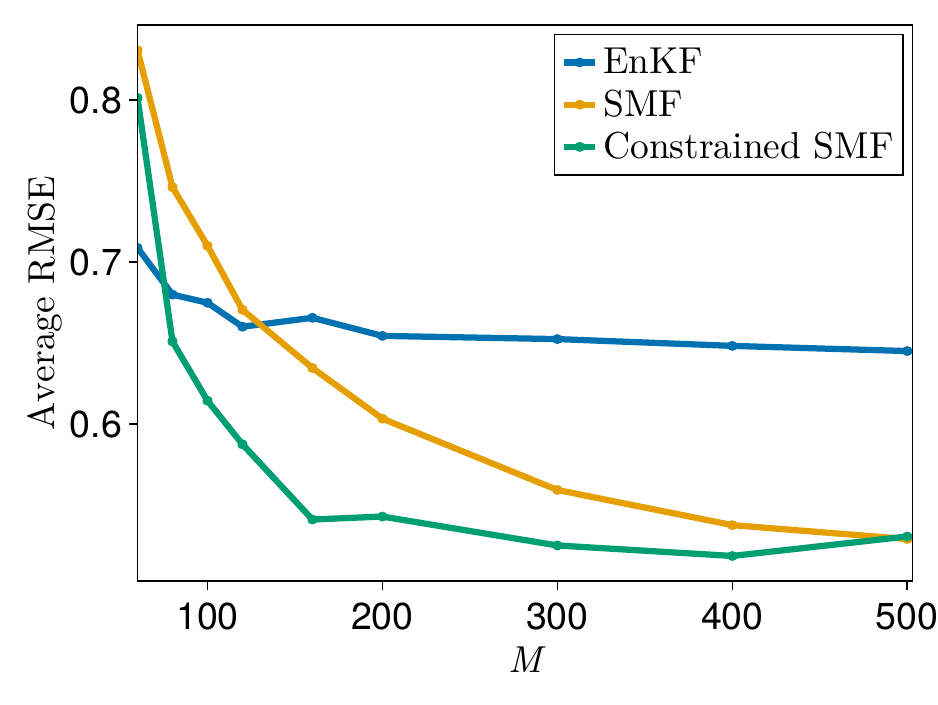}
    ~
    \includegraphics[width = 0.475\linewidth]{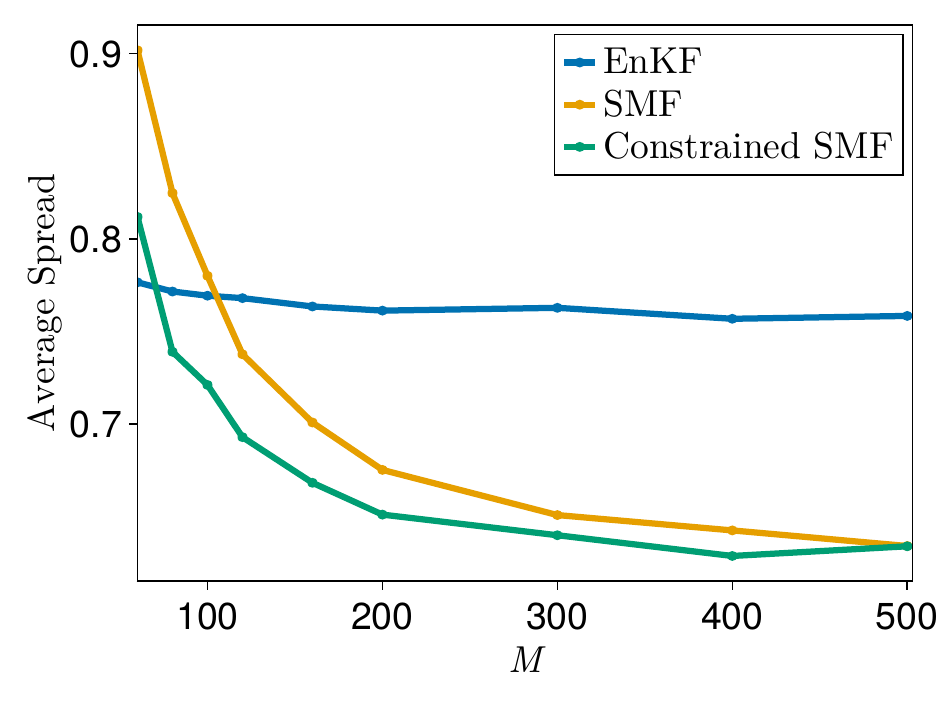}
    \caption{
        Left: Time-averaged RMSE of the \elo problem using the \unenkf (blue), the \unsmf (yellow), and the \consmf (green) for varying ensemble sizes $M$. 
        Right: Median of the spread for varying ensemble sizes $M$. 
        All filters use optimally tuned inflation. 
    }
    \label{fig:embedded_lorenz63_metrics}
\end{figure}

Figure \ref{fig:embedded_lorenz63_metrics} reports the time-averaged RMSE and the spread of the \elo using the \unenkf (blue), the \unsmf (yellow), and the \consmf (green) for varying ensemble sizes $M$. 
All filters have optimally tuned inflation. 
For small ensemble size $M < 100$, the unconstrained \enkf has a lower RMSE and spread than the stochastic map filters. For larger ensemble sizes $M > 100$, we observe consistent improvements in RMSE and spread with the nonlinear filters.  For large ensemble size $M \approx 500$, the \unsmf and \consmf perform similarly in terms of RMSE and spread, corresponding to a reduction of the RMSE by $18\%$ and the spread by $16\%$ with respect to the \enkf. These results echo the conclusions of \citep{spantini2022coupling} on the bias-variance tradeoff of nonlinear filters with limited samples. 
Interestingly, the \consmf always performs better than the \unsmf. For $M > 80$, the \consmf achieves the lowest RMSE and spread. 
For $M \in [60, 200]$, we observe that the RMSE of the \consmf rapidly decreases from $8 \cdot 10^{-1}$ for $M = 60$ to $5.3 \cdot 10^{-1}$ for $M = 200$. 
In contrast, the RMSE and spread of \unsmf exhibit slower decay with increasing ensemble size. 
The RMSE of the \consmf plateaus at about $5.3 \cdot 10^{-1}$ for $M > 160$, while the \uncons \smf requires $M = 500$ samples to achieve a similar performance. 

\begin{figure}[tb]
    \centering
    \includegraphics[width = 0.6\linewidth]{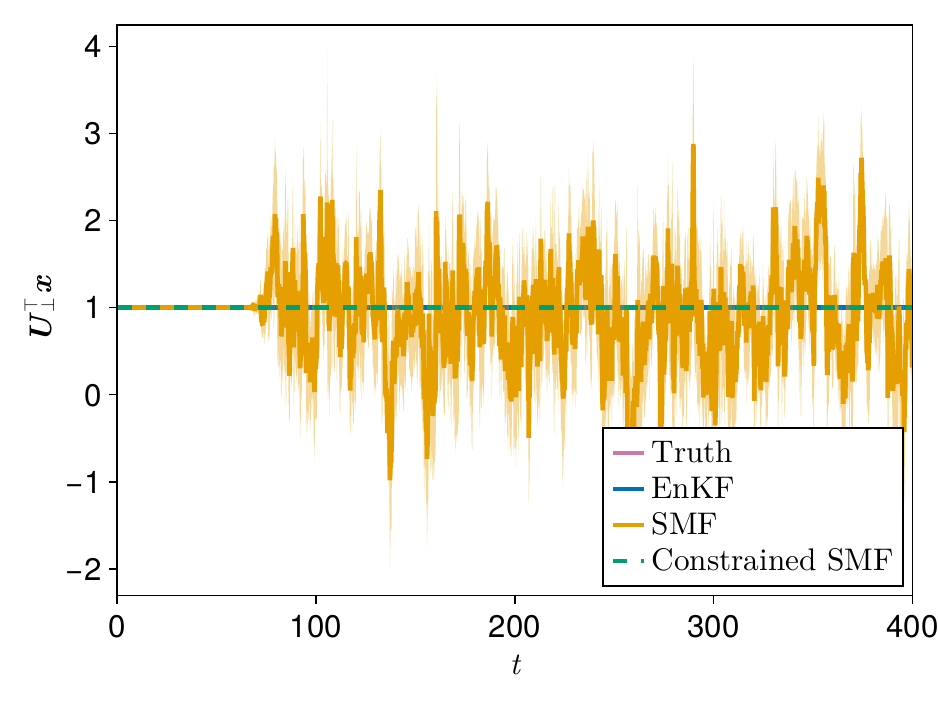}
    \vskip -0.5cm
    \caption{
        Linear invariant $\Uperp^\top \x_t$ for the true state process (pink) and the posterior mean of the \unenkf (blue), the unconstrained \smf (yellow), and the constrained \smf (dashed green) for ensemble size $M = 160$. 
        Fainted areas show the $10\%$ and $90\%$ quantiles of the posterior estimate of the invariant. 
        All filters use optimally tuned inflation. 
    }
    \label{fig:embedded_lorenz63_invariant}
\end{figure}

Furthermore, we observe from Figure \ref{fig:embedded_lorenz63_invariant} that the \unenkf and the \consmf conserve the linear invariant. 
However, the \unsmf shows significant variations in the linear invariant estimate (up to $200\%$) for $M = 160$. 
These results suggest that combining preservation of linear invariants with nonlinear prior-to-posterior transformations is beneficial for non-Gaussian filtering problems.

\begin{remark}
	In all experiments, the inflation and localization parameters are tuned to minimize the RMSE, following a common practice when the primary objective is to assess estimation accuracy. 
	We emphasize, however, that the ensemble spread is an equally important diagnostic quantity, as it reflects the filter’s representation of uncertainty and should ideally be consistent with the RMSE. 
	In the present study, our goal is to compare the relative performance of constrained and unconstrained methods under a common and controlled tuning strategy. 
	For this reason, we apply the same RMSE-based tuning procedure across all methods. 
	As observed in \cref{fig:toy_problem,fig:linadvection_metrics,fig:embedded_lorenz63_metrics}, the spread of the constrained and unconstrained filters remains of comparable magnitude, indicating that the improvements in RMSE achieved by the constrained approaches are not accompanied by a degradation of ensemble dispersion. 
	A more comprehensive calibration strategy that jointly accounts for RMSE and spread would be of clear practical interest but is beyond the scope of this work and left for future investigation.
\end{remark}
\section{Conclusion}
\label{sec:conclusion}

We have introduced \linpam s, a class of analysis maps that preserve linear invariants $\invar(\x) = \Uperp^\top \x$ of the forecast distribution in (potentially non-Gaussian) filtering problems. 
In the special case of jointly Gaussian observations and states, we recover a \cons formulation of the Kalman filter analysis map, where the update is projected onto the orthogonal complement of the columns of $\Uperp$. 
Furthermore, we have clarified existing results on preserving linear invariants for the vanilla (\uncons) Kalman filter and ensemble Kalman filter. 
In particular, we have shown that regularization techniques for the ensemble Kalman filter, such as covariance inflation, localization, or tapering, can violate linear invariants. 
We have also demonstrated how to combine these regularization techniques for the ensemble Kalman filter while preserving the linear invariants. 

The tools developed in this study are not limited to the filtering setting, as the analysis step solves a static Bayesian inverse problem \citep{leprovost2021low,leprovost2022lowenkf}. 
In fact, the techniques presented here are readily applicable to other ensemble-based methods used to solve static inverse problems while preserving linear invariants \citep{iglesias2013ensemble,zhang2020regularized, garbuno2020interacting}. 

Future works will construct analysis maps and ensemble filters for important non-linear invariants, such as Hamiltonians in mechanics \citep{del2018introduction} or energy and entropy in hyperbolic conservation laws \citep{leveque1992numerical,glaubitz2020shock}. 
For such non-linear invariants, a projection-based approach, in which one first performs an unconstrained analysis update and then projects onto the desired invariant-preserving manifold, may be computationally more practical than performing the analysis step directly on the constraint manifold.
However, the statistical interpretation of such an approach (which distribution does the ensemble approximate?) may be less clear for projection-based methods.

Finally, future work should not only address how invariants can be preserved in the analysis step of data assimilation methods, but also the equally important question of when such preservation is appropriate. 
In many settings, one may argue that quantities such as total mass or energy should be conserved for each ensemble member—either strictly in isolated systems or up to boundary fluxes in non-isolated systems—provided that reliable prior estimates of these quantities are available. 
However, when these initial estimates are uncertain or biased, enforcing strict conservation may no longer be justified. 
In such cases, it may be more appropriate to first update the estimate of the invariant itself and subsequently propagate this correction to the full state.
These considerations point to a broader, largely unexplored design space for DA algorithms in which the treatment of invariants is adaptive and informed by uncertainty. 
In this context, smoothing frameworks offer a particularly appealing perspective: 
Rather than enforcing a fixed numerical value, one may require invariants to remain consistent over time, thereby allowing their values to be inferred from data.
We believe that revisiting the role of invariants in DA—from both theoretical and algorithmic perspectives—opens a rich avenue for future research. 
By bridging physical structure, statistical inference, and computational methodology, this line of work has the potential to significantly enhance the reliability and interpretability of data-driven models. 
We hope that the ideas presented here will stimulate further developments in this direction and inspire the community to explore these opportunities.

\section*{Data Accessibility.} 
The code to reproduce the computational results is available at \href{https://github.com/mleprovost/Paper-Linear-Invariants-Ensemble-Filters}{\textcolor{magenta}{https://github.com/mleprovost/Paper-Linear-Invariants-Ensemble-Filters}}.

\section*{Funding.} 
MLP and YM acknowledge support of the National Science Foundation under Grant PHY-2028125. 
JG and YM acknowledge support of the US DOD (ONR MURI) under Grant N00014-20-1-2595.
JG acknowledges support by the Swedish Research Council (VR) Starting Grant \#2025-05370, the Zenith Career Development Grant \#26.07, and the National Academic Infrastructure for Supercomputing in Sweden (NAISS) grants \#2025/22-1599 and \#2024/22-1207.

\section*{Acknowledgments}
The authors thank Ricardo Baptista, Jeff Eldredge, Thomas Izgin, Matthew Levine, and Daniel Sharp for insightful discussions and constructive feedback.

\section*{Dedication.} 
We (JG and YM) dedicate this manuscript to the memory of our friend and colleague Mathieu Le Provost. Mathieu initiated this line of research and brought us together to work on it, but passed away before it could be completed. We hope to honor his efforts and ideas by finishing this manuscript on his behalf.

\appendix 
\section{Exact Bayesian updates preserve invariants}
 \label{apx:invariants}

\begin{theorem}
    Consider a prior $\pdf{\X}$, a likelihood $\pdf{\Y \given \X}$, and an invariant $\invar \colon \real{n} \to \real{r}$. 
    Let us assume that the invariant is constant over the prior distribution $\pdf{\X}$, \ie $\invar(\x) =  \C \in \real{r}$ for any realization $\x$ of $\X$. Then the invariant $\invar$ is also constant over the posterior distribution $\pdf{\X \given \Y}$. 
\end{theorem}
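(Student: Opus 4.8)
The plan is to exploit the fact that Bayes' rule produces a posterior that is absolutely continuous with respect to the prior, so conditioning on data cannot move probability mass outside the prior's support. First I would make the hypothesis precise: the statement that $\invar$ is constant over $\pdf{\X}$, i.e. $\invar(\x) = \C$ for every realization $\x$ of $\X$, means that the set $A = \{\x \in \real{n} : \invar(\x) = \C\}$ carries full prior probability, equivalently $\pdf{\X}(\x) = 0$ for (Lebesgue-)almost every $\x \notin A$. The conclusion to be proved is the analogous statement for $\pdf{\X \given \Y}$.

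Next I would invoke Bayes' rule for a fixed realization $\y$ of the observation variable with $\pdf{\Y}(\y) > 0$ (which holds for $\pdf{\Y}$-almost every $\y$):
\begin{equation}
    \pdf{\X \given \Y}(\x \given \y) = \frac{\pdf{\Y \given \X}(\y \given \x)\, \pdf{\X}(\x)}{\pdf{\Y}(\y)}.
\end{equation}
The decisive observation is that $\pdf{\X}(\x)$ appears as a multiplicative factor on the right-hand side, so $\pdf{\X \given \Y}(\x \given \y) = 0$ whenever $\pdf{\X}(\x) = 0$. Integrating over the complement of $A$ then gives $\int_{A^c} \pdf{\X \given \Y}(\x \given \y)\, \d\x = 0$, hence $\int_A \pdf{\X \given \Y}(\x \given \y)\, \d\x = 1$; that is, $\invar(\x) = \C$ for every realization $\x$ of $\X \given \Y = \y$, which is exactly the assertion. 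Only the proportionality $\pdf{\X \given \Y}(\x \given \y) \propto \pdf{\Y \given \X}(\y \given \x)\, \pdf{\X}(\x)$ is used, so the argument is insensitive to normalization, and it applies verbatim to the filtering recursion \eqref{eqn:filtering_update}.

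I expect the main---and essentially only---obstacle to be measure-theoretic bookkeeping rather than genuine difficulty: both hypothesis and conclusion are statements that hold almost surely, not pointwise, so the argument should be phrased in terms of the prior and posterior probabilities of the set $A$ rather than pointwise identities of densities, and one must check that the exceptional $\y$-set on which $\pdf{\Y}(\y) = 0$ is $\pdf{\Y}$-null. Beyond this, there is no analytic content: the result is precisely the general fact that Bayesian updating cannot create support where the prior had none.
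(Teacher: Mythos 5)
Your argument is correct and is essentially the paper's own proof: both rest on the single observation that Bayes' rule multiplies the prior density by a nonnegative factor, so the posterior support (equivalently, the set of full posterior mass) cannot exceed the prior support, which by hypothesis lies in the level set $\{\x : \invar(\x) = \C\}$. Your additional remarks on almost-sure versus pointwise statements are a reasonable refinement but do not change the substance of the argument.
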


\begin{proof}
	Recall that the support of a distribution $\pdf{}$ is defined as $\supp{\pdf{}} = \{ \, \x \in \real{n} \mid \pdf{}(\x) > 0 \, \}$.
	If the invariant $\invar$ is constant over the prior distribution $\pdf{\X}$, then 
	\begin{equation}\label{eqn:support_prior}
    		\supp{\pdf{\X}} \subseteq \left\{ \, \x \in \real{n} \mid \invar(\x) = \C \, \right\}.
	\end{equation}
	Bayes' rule implies that the support of the posterior $\pdf{\X \given \Y} = \pdf{\Y \given \X} \cdot \pdf{\X}  /\pdf{\Y}$ is included in that of the prior: 
	\begin{equation}\label{eqn:support_prior_posterior}
    		\supp{\pdf{\X \given \Y}} \subseteq \supp{\pdf{\X}}
	\end{equation} 
	Indeed, a multiplication by the non-negative quantity $\pdf{\Y \given \X}(\y \given \x)/\pdf{\Y}(\y) \geq 0$ cannot increase the posterior support $\supp{\pdf{\X \given \Y}}$ beyond $\supp{\pdf{\X}}$. 
	Combining \eqref{eqn:support_prior} and \eqref{eqn:support_prior_posterior}, we conclude that 
	\begin{equation}
    		\supp{\pdf{\X \given \Y}} \subseteq \left\{ \, \x \in \real{n} \mid \invar(\x) = \C \, \right\},
	\end{equation}
	which yields the assertion.
\end{proof}
\section{Algorithm for the analysis step of the \cons ensemble Kalman filter\label{apx:linear_constrained}}

Algorithm \ref{algo:linear_constrained} presents pseudo-code for the analysis step of the \cons ensemble Kalman filter (\consenkf). 
It transforms a set of forecast samples into filtering samples by assimilating the observation $\ystar$ while preserving linear invariants $\Uperp^\top \x = \C \in \real{r}$ of the samples.

\begin{algorithm}
\caption{\texttt{consEnKF}$(\ystar, \Obs, \pdf{\Noiseobs}, \Uperp, \{\x^i\})$ assimilates $\ystar$ into $\{\x^i\}_{i=1}^M$ while preserving linear invariants}
\label{algo:linear_constrained}
\begin{algorithmic}[1]
	\State{ 
		\textbf{Input:} $\ystar \in \mathbb{R}^d$, linear observation operator $\Obs \in \mathbb{R}^{d \times n}$, observation noise distribution $\pdf{\Noiseobs} = \mathcal{N}(\mathbf{0}, \cov{\Noiseobs})$, sub-unitary matrix $\Uperp \in \mathbb{R}^{r \times n}$ for the linear invariants $\Uperp^\top \x = \C \in \mathbb{R}^r$, $M$ samples $\{\x^i\}$ from $\pdf{\X}$
	}
	\State{
		\textbf{Output:} Samples $\{\x_a^i\}_{i=1}^M$ from $\pdf{\X \mid \Y = \ystar}$
	}
	\State{
		Generate observation noise samples $\{ \noiseobs^i \}_{i=1}^M$ by drawing from $\mathcal{N}(\mathbf{0}, \cov{\Noiseobs})$
	}
	\State{
		Form perturbation matrices $\mathbf{A}_{\X} \in \mathbb{R}^{n \times M}$ and $\mathbf{A}_{\Noiseobs} \in \mathbb{R}^{d \times M}$ for state and noise: 
		$$
			\mathbf{A}_{\X}[:, i] \gets \frac{1}{\sqrt{M - 1}}(\x^i - \bar{\x}), \quad 
			\mathbf{A}_{\Noiseobs}[:, i] \gets \frac{1}{\sqrt{M - 1}}(\noiseobs^i - \bar{\noiseobs}), \quad 
			i= 1,\dots,M
		$$
	}
	\State{
		Apply Kalman gain based on representers: Solve 
		$$ 
			\left( \Obs \mathbf{A}_{\X} \mathbf{A}_{\X}^\top \Obs^\top + \mathbf{A}_{\Noiseobs} \mathbf{A}_{\Noiseobs}^\top \right) \mathbf{b}^i = \Obs \x^i + \noiseobs^i - \ystar
		$$	
	}
	\State{
		Build the posterior samples: $\x_a^i \gets \x^i - (\mathbf{I} - \Uperp \Uperp^\top) \mathbf{A}_{\X} (\Obs \mathbf{A}_{\X})^\top \mathbf{b}^i$ for $i=1,\dots,M$
	}
	\State{ 
		\Return $\{ \x_a^i \}_{i=1}^M$
	}
\end{algorithmic}
\end{algorithm}
\section{Recursive assimilation of conditionally independent observations \label{apx:recursive_assim}}

In many settings, the observations $\Y \in \real{d}$ to assimilate are conditionally independent, \ie $\mathsf{Y}_j \indep \mathsf{Y}_k \given \X$ for  $j \neq k$, \citep{spantini2022coupling, houtekamer2001sequential}. 
Then the likelihood $\pdf{\Y \given \X}$ factorizes as $\pdf{\Y \given \X} = \prod_{j=1}^d \pdf{\mathsf{Y}_j \given \X}$. 
In this case, we show that the assimilation of a $d$-dimensional observation $\ystar \in \real{d}$ can be made in $d$ recursive steps by assimilating one scalar observation $\ystar_j$ at a time in the state, \ie the posterior from assimilating $j$ components $\ystar_{1:j}$ can be used as a prior for assimilating the next scalar observation $y^\star_{j+1}$. 
From Bayes' rule, we derive a recursive relation to assimilate the observation $\mathsf{Y}_{j+1}$ in the conditional distribution $\pdf{\X \given \Y_{1:j}}$:
\begin{equation}
\label{eqn:sequential_bayes}
    \pdf{\X \given \Y_{1:j+1}} = \frac{\pdf{\Y_{1:j+1} \given \X} \pdf{\X}}{\pdf{\Y_{1:j+1}}} = \frac{\pdf{\mathsf{Y}_{j+1} \given \X, \Y_{1:j}} \pdf{\Y_{1:j} \given \X} \pdf{\X}}{\pdf{\mathsf{Y}_{j+1} \given \Y_{1:j}} \pdf{\Y_{1:j}}} = \frac{\pdf{\mathsf{Y}_{j+1} \given \X, \Y_{1:j}}} {\pdf{\mathsf{Y}_{j+1} \given \Y_{1:j}}} \pdf{\X \given \Y_{1:j}}
\end{equation}
This factorization holds for an arbitrary likelihood $\pdf{\Y \given \X}$. 
In general, the ``conditional'' likelihood $\pdf{\mathsf{Y}_{j+1} \given \X, \Y_{1:j}}$ is intractable, and one cannot easily use \eqref{eqn:sequential_bayes} to perform a recursive update of the prior without further assumptions on the likelihood $\pdf{\Y \given \X}$. 
However, if we assume that the observations are conditionally independent, \ie $\mathsf{Y}_j \indep \mathsf{Y}_k \given \X$ for  $j \neq k$, \eqref{eqn:sequential_bayes} simplifies to
\begin{equation}
\label{eqn:sequential_bayes_cond_indep}
    \pdf{\X \given \Y_{1:j+1}} = \frac{\pdf{\mathsf{Y}_{j+1} \given \X}}{\pdf{\mathsf{Y}_j}} \pdf{\X \given \Y_{1:j}},
\end{equation}
and one can assimilate the observations recursively.

\bibliographystyle{elsarticle-num-names}
\bibliography{references}

\end{document}